\setlist{topsep=2pt,parsep=0pt,partopsep=0pt,itemsep=2pt}
\setlist[1]{labelindent=\parindent}
\setlist[2]{topsep=0pt}
\journal{Theoretical Computer Science}
\newdefinition{definition}{Definition}[section]
\newdefinition{remark}{\normalfont \it Remark}[section]
\newdefinition{example}{\normalfont \it Example}[section]
\newtheorem{theorem}{Theorem}[section]
\newtheorem{proposition}[theorem]{Proposition}
\newtheorem{lemma}[theorem]{Lemma}
\newtheorem{corollary}[theorem]{Corollary}
\newproof{proof}{\textit{Proof}}
\newproof{sproof}{\textit{Sketch of proof}}
\newcommand{\ie}{i.e.,\xspace}
\newcommand{\eg}{e.g.,\xspace}
\newcommand{\etc}{etc.\xspace}
\newcommand{\Z}{\mathbb{Z}}
\newcommand{\N}{\mathbb{N}}
\renewcommand{\P}{\mathcal{P}}
\newcommand{\T}{\tau}
\newcommand{\R}{\mathcal{R}}
\newcommand{\rib}{\rho}
\newcommand{\Ribs}{\mathfrak{R}}
\newcommand{\Zips}{\mathfrak{Z}}
\newcommand{\Tils}{\mathfrak{T}}
\newcommand{\NVN}{N_{vN}}
\DeclareMathOperator{\dom}{dom}
\DeclareMathOperator{\range}{range}
\newcommand{\set}[1]{\ensuremath{\left\{#1\right\}}}
\newcommand{\card}[1]{\ensuremath{\left|#1\right|}}
\newcommand{\abs}[1]{\ensuremath{\left|#1\right|}}
\newcommand{\st}{\mid}
\newcommand{\scale}[1]{\left|#1\right|}
\newcommand{\ignore}[1]{}
\begin{document}

\begin{frontmatter}

\title{Polyominoes Simulating Arbitrary-Neighborhood\\Zippers and Tilings\tnoteref{FNANO,sponsors}}
\tnotetext[FNANO]{Some results in Section~\ref{sec:ziprib} of this paper were presented at FNANO 2009 conference~\cite{KM09}.}
\tnotetext[sponsors]{This research was supported by a Natural Sciences and Engineering Research Council of Canada discovery grant, and by a Canada Research Chair award to L.K.}
\author[UWO]{Lila Kari\corref{cor}}
\ead{lila@csd.uwo.ca}

\author[UWO]{Beno\^it Masson}
\ead{benoit@csd.uwo.ca}

\cortext[cor]{Corresponding author.}
\address[UWO]{University of Western Ontario -- Department of Computer Science\\Middlesex College, London, Ontario, Canada, N6A 5B7}

\begin{abstract}
This paper provides a bridge between the classical tiling theory and the complex neighborhood self-assembling situations that exist in practice.

The neighborhood of a position in the plane is the set of coordinates which are considered adjacent to it. This includes classical neighborhoods of size four, as well as arbitrarily complex neighborhoods. A generalized tile system consists of a set of tiles, a neighborhood, and a relation which dictates which are the ``admissible'' neighboring tiles of a given tile. Thus, in correctly formed assemblies, tiles are assigned positions of the plane in accordance to this relation.

We prove that any validly tiled path defined in a given but arbitrary neighborhood (a zipper) can be simulated by a simple ``ribbon'' of microtiles. A ribbon is a special kind of polyomino, consisting of a non-self-crossing \ignore{rectilinear} sequence of tiles on the plane, in which successive tiles stick along their adjacent edge.\ignore{, and where each tile needs to match glues with only two other tiles: its predecessor and its successor on the path. }\ignore{Our constructions simulate each of the existing tiles by a polyomino of microtiles, whose shape is used to simulate the given tile and the communication of information between itself and its neighbors. The polyominoes can then be catenated together to simulate the entire complex-neighborhood tiled path by a continuous two-tile-neighborhood ribbon.}

Finally, we extend this construction to the case of traditional tilings, proving that we can simulate arbitrary-neighborhood tilings by simple-neighborhood tilings, while preserving some of their essential properties. 
\end{abstract}

\begin{keyword}
DNA computing \sep self-assembly \sep tilings \sep polyominoes
\end{keyword}
%\subjclass{F.1.1 Models of Computation}

\end{frontmatter}

\section{Introduction}

% General intro
Because of the constant miniaturization of components, microscopic elements in the fields of electronics, engineering or even medicine are becoming more and more difficult to construct and to manipulate. A recent approach to work around this problem is \emph{self-assembly}. It consists in ``programming'' the nano-components, so that when starting from an initial pool of selected components, they automatically aggregate to form bigger and bigger structures, until they eventually form a final complex structure.

The first formal models for self-assembly were introduced a decade ago~\cite{W98,WLWS98,A00}. In this framework, self-assembling components were modeled by Wang tiles~\cite{W61}, \ie unit squares that cannot be rotated and that have ``glues'' on each of their four edges. Two tiles stick to each other if they have the same glue on their common edge. By carefully designing the glues, and starting with an initial tile called ``seed'', complex structures can self-assemble.

The use of this simple model as a formalization of the process of self-assembly allowed the application for theoretical studies of dynamical self-assembly of many well-known existing results and techniques concerning ``static'' tilings~\cite{W61} and cellular automata~\cite{K05}, such as the undecidable problem of the tiling of the plane, and the simulation of a Turing machine by a tile system~\cite{B66,R71}.

\smallskip

Most of the theoretical results on self-assembly presume that each tile interacts via glues only with tiles in its so-called \emph{von Neumann neighborhood}, which includes the four tiles situated at the North, South, East, and West of the tile itself~\cite{RW00}%
\footnote{Total tilings can also be seen as bidimensional subshifts of finite type, and studied as such in the field of symbolic dynamics (see~\cite{LM95} for a complete introduction, in dimension 1). In this settings, the ``neighborhood'' is arbitrary, by definition of the subshift. Although some results exist for subshifts in dimensions greater than 1~\cite{S93,W94,H09}, they do not apply to partial tilings.}%
. Only relatively few recent results consider more general cases, such as larger neighborhood~\cite{AKKRS09}, or a three-dimensional neighborhood~\cite{BRS08}. Even the most well-known experimental incarnation of square tiles, the DNA tiles~\cite{WLWS98,WYS98,MLRS00,RPW04,FHPWM07}, deal only with the von Neumann-sized neighborhood, where the DNA single strands located at the corner of each rectangular DNA tile allow its interaction with four neighbors only. Other experimental situations that could be modeled by self-assembly of tiles, such as atomic or molecular interactions, potentially include more complex scenarios where the neighborhood of a tile is both larger and more complex than the von Neumann neighborhood. At the limit, one can consider the case of an arbitrary neighborhood where tiles that are not physically adjacent to the main tile may be its neighbors, while some adjacent tiles may not.

In~\cite{AKKRS09}, it was proved that, for any directed tile system, any von Neumann-neighborhood ``zipper'' (a tiled rectilinear path) can be simulated by a ``ribbon'' constructed with tiles from a new tile system. A ribbon is a non-self-crossing rectilinear succession of tiles in the plane, where each tile is required to have glues matching with two tiles only: its predecessor and its successor on the ribbon-path. The construction that simulated a directed von Neumann-neighborhood tiled path by a ribbon, replaced each of the existing tiles by so-called ``motifs'' which traced the contours of the initial tile but where, in addition, bumps and matching dents along the motif edges simulated both the matching of the glues and the directionality of the path. In other words, geometry of the motifs was used to simulate glue matching. Note that motifs, as well as ribbons, are particular cases of \emph{polyominoes}~\cite{SW05,BN91}, \ie finite and connected sets of DNA tiles. 

\smallskip

This polyomino construction led to a conjecture by Jarkko Kari, claiming that it is possible to ``simulate'' an arbitrary-neighborhood zipper by a simple two-tile-neighborhood ribbon. A first step in this direction was~\cite{CK10,CK09}, wherein it was proved that a complex-neighborhood zipper, defined for example on the Moore neighborhood (von Neumann plus four diagonal neighbors), can be simulated by a ribbon of irregularly shaped tiles, where the shape was used to simulate the neighborhood relationship.

The aim of this paper is to answer the above conjecture positively for the case of arbitrary-neighborhood zippers, thus providing a bridge between the classical work in tiling theory or cellular automata and the realistic complex-neighborhood self-assemblies that exist in practice. We namely prove in Corollary~\ref{cor:main} that for \emph{any} neighborhood, zippers can be simulated by simple polyominoes connected to each other end-to-end to form a ribbon that essentially traces the same path. The main idea used in our simulation is that each existing tile can be replaced by a polyomino, where the shape of the polyomino is used to simulate the communication between a tile and its adjacent or distant neighbors. We also show that, by the design of the shapes of the polyominoes, we can transmit information at a distance, sometimes across other information pathways, without violating the non-self-crossing feature of the ribbon. Such situations where information pathways cross are inherent in, for example, Moore neighborhoods where, \eg the communication channel between a tile and its Northeast neighbor ``crosses'' the communication channel between its North neighbor and its East neighbor.

We also explain how the simulation of zippers by ribbons can be modified to simulate arbitrary-neighborhood tilings by von Neumann-neighborhood tilings. The idea is to modify the polyominoes, adding new constraints so that the two-tile neighborhood can be replaced by a von Neumann neighborhood (Corollary~\ref{cor:mainT}). The main significance of this simulation, as opposed to more intuitive ones where some ``supertiles'' are used to transfer information, is that it applies to all tilings, even when they are partial. Besides, some essential properties of the initial tiling are preserved by the simulation. We prove that this is the case for partial and periodic tilings, and in a more restricted setting, for convex tilings.

\smallskip

The paper is organized as follows. In Sect.~\ref{sec:def}, we recall basic definitions and give a formal definition of a simulation. Then, Sect.~\ref{sec:ziprib} describes our construction in the case of zippers, starting with the general idea of simulating a zipper by a ribbon, by sketching the proof from~\cite{AKKRS09}, in the simple case of a von Neumann neighborhood. We also highlight the technical difficulties related to crossing of information pathways, that prevent this technique from being transferable without modifications to the case of the Moore neighborhood~\cite{CK10}. Then, we prove our result for the case of arbitrary linear neighborhoods. This construction is eventually generalized to arbitrary neighborhoods to prove the main result of this section. Finally, in Sect.~\ref{sec:tilings}, we adapt the construction to the simulation of regular tilings, and we study how partial, periodic, line or column convex properties are preserved.

%%%%%%%%%%%%%%%%%%%%%%%%%%%% DEFINITIONS %%%%%%%%%%%%%%%%%%%%%%%%%%%%

\section{Definitions}\label{sec:def}

First, we give some basic definitions on tilings, and then we introduce more technical definitions to describe the principles of simulation by polyominoes. We finally illustrate this on a concrete example, the simulation of total tilings.

\subsection{Tilings, Ribbons, and Zippers}

  Historically, Wang tiles~\cite{W61} were defined as oriented unit squares, on the border of which were 4 \emph{glues} (colors) used to stick them to neighboring tiles, provided the glues matched. We generalize this notion to arbitrary neighborhoods~\cite{K05,AKKRS09}, \ie neighborhoods which can contain other tiles than the North, South, West and East tiles. In this paper, a \emph{neighborhood} $N \subset \Z^2$ is the set of relative coordinates of the neighboring tiles, such that
\begin{itemize}
  \item $N$ is finite;
  \item $(0, 0) \not\in N$ (a tile can not be one of its neighbors);
  \item $(i, j) \in N \Rightarrow (-i, -j) \in N$ (if a tile $t'$ is a neighbor of a tile $t$, then $t$ has to be a neighbor of $t'$).
\end{itemize}
For example, the usual 4-tile neighborhood (called \emph{von Neumann} neighborhood~\cite{K05}) is $\NVN = \set{(0,1), (1,0), (0,-1), (-1,0)}$; the 8-tile \emph{Moore} neighborhood is $N_M = \set{(0,1), (1,1), (1,0), (1,-1), (0,-1), (-1,-1), (-1,0), (-1,1)}$.

For the following definitions, a neighborhood $N$ and a finite set of glues $X$ are fixed. A \emph{tile} is a tuple $t = \left(t_{i,j}\right)_{(i,j) \in N}$ where each $t_{i,j} \in X$. Intuitively, $t_{i,j}$ is the glue that will be used to match the neighbor located at position $(i,j)$ relative to the tile $t$. A \emph{tile system} $T$ is a finite set of tiles, used to build larger structures. A tile $t \in T$ \emph{sticks} at position $(i,j)$ to a tile $t' \in T$ if the corresponding glues match, \ie $t_{i,j} = t'_{-i,-j}$.

A \emph{$(T,N)$-tiling} using tile system $T$ in neighborhood $N$ (or \emph{$N$-neighborhood tiling}, or simply \emph{tiling} when the tile system and its neighborhood are known without ambiguity) is a mapping $\T : D \to T$, where $D \subset \Z^2$ is a subset of the plane, which associates every position $(x,y) \in D$ with a tile, such that all tiles stick to their neighbors, \ie for all $(x,y) \in D$, for all $(i,j) \in N$ such that $(x+i,y+j) \in D$, $\T(x,y)_{i,j} = \T(x+i,y+j)_{-i,-j}$. Note that tilings are often referred to as ``valid'' tilings in the literature. The set of all $(T,N)$-tilings is denoted $\Tils_{T,N}$.

When $D = \Z^2$, the tiling is said to be \emph{total}, otherwise it is \emph{partial}. In addition, if $D$ is finite then the partial tiling is also \emph{finite}. A tiling $\T: D \to T$ is \emph{connected} if its domain $D$ is $4$-connected. As suggested in~\cite{SW05}, we call \emph{polyomino} a finite and connected tiling\footnote{The usual definition of a polyomino (see for example~\cite{BN91}) refers to a domain $D \subset \Z^2$, while here we add to this notion a mapping to tiles. Besides, polyominoes are often defined as simply connected sets of squares, \ie without holes, which is not required in this paper.}.

According to the usual definition, a total tiling $\T$ is \emph{periodic} if it admits a horizontal and a vertical \emph{period} $p_h, p_v \in \N_+$, \ie for all $x, y \in \Z$, $\T(x,y) = \T(x+p_h,y) = \T(x,y+p_v)$. We extend this definition to partial tilings as follows: a tiling $\T: D \to T$ is periodic if it admits a horizontal and a vertical period $p_h, p_v \in \Z$ such that for all $(x, y) \in D$ and for all $\alpha, \beta \in \Z$, $(x+\alpha p_h,y+\beta p_v) \in D$ implies $\T(x,y) = \T(x+\alpha p_h,y+\beta p_v)$. Note that with this extended definition, all finite tilings are periodic (the periods should be ``larger'' than the tiling itself). 

A tiling $\T: D \to T$ is \emph{line convex} [resp., \emph{column convex}] if $(x_1, y), (x_2,y) \in D$ and $x_1 < x_2$ [resp., $(x, y_1), (x,y_2) \in D$ and $y_1 < y_2$] imply that for all $x_1 < x < x_2$ [resp., for all $y_1 < y < y_2$], $(x,y) \in D$. A tiling is \emph{convex} if it is both line and column convex.

\smallskip

Two positions of the plane $u,v \in \Z^2$ (or, by extension, tiles of a tiling) are said to be \emph{adjacent} when they are neighbors in the von Neumann sense, \ie $v-u \in \NVN$. A \emph{path} is a sequence of adjacent positions of the plane. Formally, a path is a function $P : I \to \Z^2$, where $I \subset \Z$ is a set of consecutive integers, such that for all $i,i+1 \in I$, $P(i)$ and $P(i+1)$ are adjacent. For a given tile system $T$, a \emph{$T$-tiled path} using tile system $T$ is a sequence of adjacent tiles from $T$, \ie a pair $(P,r)$ where $P$ is a path and $r : \range(P) \to T$ a mapping from positions to tiles. We say that a $T$-tiled path is \emph{finite} when $\dom(P)$ is finite, and we may also call a finite $T$-tiled path a \emph{polyomino} since it is connected (by definition of the path).

For a tile system $T$ defined in von Neumann neighborhood $\NVN$, a $T$-tiled path $(P,r)$ is a \emph{$T$-ribbon} using tile system $T$ (simply called \emph{ribbon} when $T$ is known without ambiguity) if $P$ is injective (non-self-crossing) and for all $i, i+1 \in \dom(P)$, $r(P(i))_v = r(P(i+1))_{-v}$, with $v = P(i+1)-P(i) \in \NVN$. The glue $r(P(i))_v$ is called the \emph{output} glue of $r(P(i))$, while $r(P(i+1))_{-v}$ is the \emph{input} glue of $r(P(i+1))$. Informally, a ribbon is a sequence of adjacent tiles which stick to their predecessor and successor only (see Fig.~\ref{fig:ribzip}\subref{fig:ribbon}). Consequently, $r$ is not necessarily a tiling.

A $T$-tiled path $(P,r)$ is a \emph{$(T,N)$-zipper} using tile system $T$ in neighborhood $N$ (or \emph{$N$-neighborhood zipper}, or \emph{zipper} when $T$ and $N$ are known) if $P$ is injective and $r$ is a $(T,N)$-tiling. A zipper can be seen as a tiling with an additional notion of unique input and output for every tile (except the first which has only an output and the last which has only an input), each input being connected to the output of an adjacent tile (see Fig.~\ref{fig:ribzip}\subref{fig:zipper}). Note that zippers can be defined in arbitrary neighborhoods, since it is not required that adjacent glues match. The set of $T$-ribbons is denoted $\Ribs_T$, the set of $(T,N)$-zippers $\Zips_{T,N}$.

\begin{figure}[!ht]
  \begin{center}
  \hfill
  \subfloat[A ribbon.]{\label{fig:ribbon}
    \includegraphics{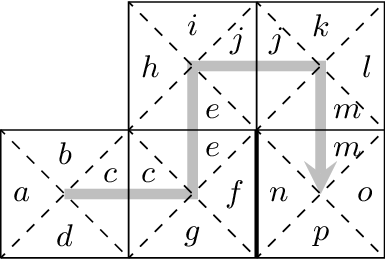}
}
  \hfill\hfill
  \subfloat[A zipper.]{\label{fig:zipper}
    \includegraphics{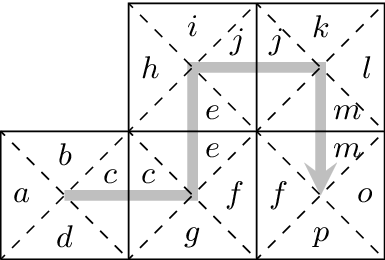}
}
  \hfill\null
  \caption{A ribbon, and a von Neumann-neighborhood zipper. The underlying path is drawn in gray. The only difference occurs at the bottom-right where glues $f$ and $n$ are not required to match in the ribbon, while in the zipper they do.}
  \label{fig:ribzip}
  \end{center}
\end{figure}

%%% Poly-trucs and simulations

\subsection{Poly-Tilings and Simulations}

In this section, we give the main idea of the simulation of a tiling by another, as well as for the simulation of a zipper by a ribbon. Informally, a simulation is a function which associates each tiling [resp., zipper] using the ``initial'' tile system in the ``initial'' neighborhood with a tiling [resp., ribbon] using a ``new'' tile system in a ``new'' neighborhood. The simulation must also allow to recover the initial object without ambiguity. To build this new object, we first associate every initial tile with unique polyominoes. Then, we catenate these polyominoes to form the new, bigger, object, called \emph{poly-tiling} or \emph{poly-ribbon}.

In the sequel, for a set $D \in \Z^2$ and a vector $u$, we denote $D + u = \set{(x,y) + u \st (x,y) \in D}$ the set $D$ translated by $u$.

\begin{definition}\label{def:macrotiling}
  A \emph{poly-tiling} $\T$ of \emph{scale} $s$\ignore{ (also denoted as $\scale{\T}$)}, using tile system $T$ in neighborhood $N$, is a mapping $\T : D \to \Tils_{T,N}$, with $D \subset \Z^2$, such that
\begin{enumerate}[label=(\emph{\roman*})]
\item\label{item:macro1} for all $u \in D$, $\T(u)$ is a finite $(T,N)$-tiling;
\item\label{item:macro2} for all $u,v \in D$, $u \ne v$, $[\dom(\T(u)) + su] \cap [\dom(\T(v)) + sv] = \emptyset$;
\item\label{item:macro3} for all $u, u' \in D$, $v \in \dom(\T(u))$ and $v' \in \dom(\T(u'))$, let $w = (su'+v')-(su+v)$; then $w \in N$ implies $\T(u)(v)_{w} = \T(u')(v')_{-w}$.
\end{enumerate}
\end{definition}

Let us discuss the items in this definition from an informal point of view, as shown in Fig.~\ref{fig:poly-tiling}. We refer only to the particular case in which all $\T(u)$ are connected and thus polyominoes, since this is what we will construct. In this case, the definition implies that a poly-tiling of scale $s$ is
\begin{itemize}
\item[\ref{item:macro1}] a juxtaposition of polyominoes on a grid of size $s$;
\item[\ref{item:macro2}] such that all these polyominoes do not overlap once shifted to their actual position on the grid, which is achieved by shifting $\T(x,y)$ by $sx$ units to the right and $sy$ units upwards;
\item[\ref{item:macro3}] neighboring polyominoes stick, \ie if two tiles of two different polyominoes become neighbors after being shifted, they have to stick.
\end{itemize}

\begin{figure}[!ht]
\begin{center}
\includegraphics{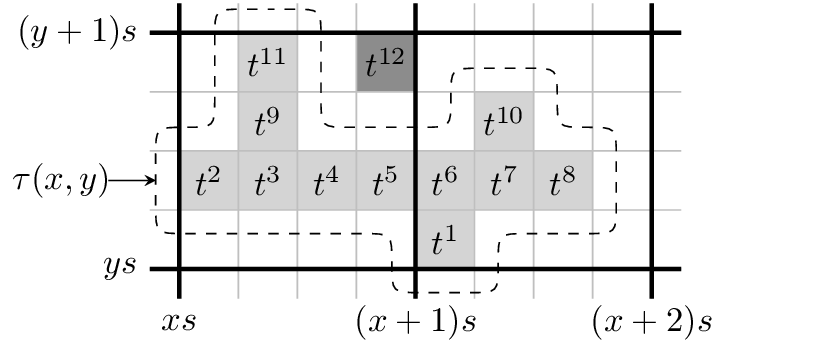}
\caption{Poly-tiling $\T$ of scale $s=4$ seen as a tiling. The dashed line surrounds the tiling $\T(x,y)$. Note that some tiles from $\T(x,y)$ may be located outside of the ``box'' of size $s$ situated at $(sx,sy)$, and conversely that some  tiles of this box (here, $t^{12}$) may not belong to $\T(x,y)$.}
\label{fig:poly-tiling}
\end{center}
\end{figure}

These characteristics allow to consider them as usual ``valid'' tilings without overlaps, as explained in the following remark.

\begin{remark}\label{rem:macrotiling}
  A poly-tiling $\T$ of scale $s$ can be easily transformed into a ``regular'' tiling $\T'$, as depicted in Fig.~\ref{fig:poly-tiling}, according to the following formula:
\[
  \T'(i,j) = (\T(x,y))(i', j')\enspace,
\]
for any $x,y, i, j, i', j' \in \Z$ which verify $i = sx+i'$, $j = sy+j'$ and $(i',j') \in \dom(\T(x,y))$. Because of condition~\ref{item:macro2} in the definition of poly-tilings, given any pair $(i,j)$, if $x, y, i', j'$ exist then they are unique. Because of~\ref{item:macro1} and~\ref{item:macro3}, $\T'$ is indeed a tiling since all glues match.
\end{remark}

We can slightly modify this notion to define poly-ribbons. Basically, we replace the property of sticking at the macro-level~\ref{item:macro3} by a notion of macro-path.

\begin{definition}\label{def:macrorib}
  A \emph{poly-ribbon}\ignore{ $\rib$} of \emph{scale} $s$\ignore{ (also denoted as $\scale{\rib}$)}, using tile system $T$ (in neighborhood $\NVN$), is a pair $(P,r)$ where $P$ is an injective path and $r : \range(P) \to \Ribs_{T}$ is such that
\begin{enumerate}[label=(\emph{\roman*})]
\item\label{item:macrorib1} for all $u \in \range(P)$, $r(u)$ is a finite $T$-ribbon;
\item\label{item:macrorib2} for all $u,v \in \range(P)$, $u \ne v$, $[\dom(r(u)) + su] \cap [\dom(r(v)) + sv] = \emptyset$ (where we define $\dom(r(u)) = \dom(r')$, if $r(u) = (P',r')$);
\item\label{item:macrorib3} for all $i, i+1 \in \dom(P)$, let $r(P(i)) = (P_1,r_1)$ and $r(P(i+1)) = (P_2,r_2)$ be two consecutive ribbons. Let $u_1 = P_1(\max(\dom(P_1)))$ be the last position of $P_1$, $u_2 = P_2(\min(\dom(P_2)))$ be the first position of $P_2$, and $v = [u_2+sP(i+1)] - [u_1+sP(i)]$ be their relative position. Then, we must have $v \in \NVN$ and $r_1(u_1)_v = r_2(u_2)_{-v}$.
\end{enumerate}
\end{definition}

The new requirement~\ref{item:macrorib3} enforces that the ribbon follows a path at the macro-level, ``jumping'' from one ribbon to the other. Therefore, in a similar way as what was done in Remark~\ref{rem:macrotiling}, we can see a poly-ribbon as a ribbon. This explains why in our constructions we build poly-tilings [resp., poly-ribbons] for simplicity, they can in turn be considered as particular tilings [resp., ribbons] from the set $\Tils_{T,N}$ [resp., $\Ribs_T$].

\smallskip

Let us now formalize the intuition that a simulation is a function $\psi$ which associates tilings with unique poly-tilings [resp., zippers with unique poly-ribbons], by means of a function $\varphi$ transforming every tile into a set of unique polyominoes. These polyominoes will be aligned on a grid of size $s$, each of them replacing a tile from the initial object at the same position, to form a poly-tiling [resp., poly-ribbon] corresponding to the initial tiling [resp., zipper].

In the following, for a set $S$, we define $\P(S) = 2^S$ as the set of subsets of $S$. Let $\varphi : T \to \P(\Tils_{T',N'})$ be a function which associates a tile $t$ with a set of finite $(T',N')$-tilings of scale $s$, such that for all $t \in T$, for all $\T \in \varphi(t)$, $(0,0) \in \dom(\T)$ (the tile $\T(0,0)$ is called the \emph{reference} tile), and such that if $t \ne t'$, for all $\T \in \varphi(t)$, $\T' \in \varphi(t')$, $u \in \dom(\T)$, and $u' \in \dom(\T')$, then $\T(u) \ne \T'(u')$ (all tiles from all tilings in $\varphi(t)$ are specific to tile $t$). In our constructions, the finite tilings $\varphi(x,y)$ will be connected \emph{polyominoes}. Let us introduce one more mapping $\chi_{\varphi} : \Tils_{T',N'} \to \Tils_{T',N'}$, which associates any $(T',N')$-tiling $\T$ with a poly-tiling $\chi_{\varphi}(\T)$, by removing incomplete or misaligned polyominoes as follows.
\begin{enumerate}
\item\label{it:simu_inv_1} First, discard all the tiles from $\T$ which do not belong to a complete polyomino $\varphi(t)$, for some $t \in T$. We obtain $\T_1$ which is a restriction of $\T$.
\item\label{it:simu_inv_2} Then, discard all the polyominoes in $\T_1$ which do not have their reference tile positioned at $(sx,sy)$ for some $x, y \in \Z$ (this is to avoid misalignment in non-connected tilings), to obtain $\chi_{\varphi}(\T)$.
\end{enumerate}
Note that $\chi_{\varphi}(\T)$ is the empty tiling for infinitely many tilings (for example, all the ones for which the reference tiles are badly positioned).

\begin{definition}\label{def:tilingsimu}
  Given such functions $\varphi$ and $\chi_{\varphi}$, a \emph{tiling-simulation} of a tile system $T$ by a tile system $T'$ is a mapping $\psi : \Tils_{T,N} \to \P(\chi_{\varphi}(\Tils_{T',N'}))$, such that $\psi$ associates any $(T,N)$-tiling $\T : D \to T$ with a non-empty subset of poly-tilings (hence tilings) from $\{\T': D \to \Tils_{T',N'} \st \T'(x,y) \in \varphi(\T(x,y))\}$; with the additional constraint that for all $(T', N')$-tilings $\T'$, there exists some unique valid $(T,N)$-tiling $\T$ such that $\chi_{\varphi}(\T') \in \psi(\T)$.
\end{definition}

Note that the constraint on $\varphi$ saying that tiles are specific implies the injectivity of $\varphi$ and of $\psi$. In fact, we even have the following stronger result: $\psi(\T) \cap \psi(\T') = \emptyset$ for two different $(T,N)$-tilings $\T$ and $\T'$.

Let us now see how to recover an initial tiling from any $(T',N')$-tiling, once a tiling-simulation $\psi$ is defined. Given a poly-tiling $\T' \in \psi(\T)$, one can restore unambiguously the initial (valid) tiling $\T$. Indeed, if $\T'(x,y)$ denotes the polyomino whose reference tile is located at $(sx, sy)$, then $\T(x,y) = \varphi^{-1}(S)$, where $S = \varphi(\T(x,y))$ is the only image set of $\varphi$ containing $\T'(x,y)$. This tiling $\T$ must be valid (all glues matching) in order for $\psi$ to be a simulation, which is enforced by the last constraint of Definition~\ref{def:tilingsimu} (since in this case, $\chi_{\varphi}(\T') = \T'$).

Moreover, for any other tiling $\T' \in \Tils_{T',N'}$ such that there is no tiling $\T$ with $\T' \in \psi(\T)$, the function $\chi_{\varphi}$ allows to recover from this situation: as in the previous case, by construction of $\varphi$, there exists a unique $(T,N)$-tiling $\T$ such that $\chi_{\varphi}(\T') \in \psi(\T)$. This tiling $\T$ can be recovered as explained in the previous paragraph. From the last constraint in Definition~\ref{def:tilingsimu}, we also deduce that $\T$ is valid.

\smallskip

The simulation of a zipper by a ribbon is defined similarly. Let $\varphi : T \to \P(\Ribs_{T'})$ associate a tile $t$ with a set of unique finite $T'$-ribbons, , such that for all $t \in T$, for all $(P,r) \in \varphi(t)$, $(0,0) \in \dom(r)$, and such that if $t \ne t'$, for all $(P,r) \in \varphi(t)$, $(P',r') \in \varphi(t')$, $u \in \dom(r)$, and $u' \in \dom(r')$, then $r(u) \ne r'(u')$. Also let $\chi_{\varphi} : \Ribs_{T'} \to \Ribs_{T'}$, which associates any $T'$-ribbon with a poly-ribbon as previously, by removing incomplete and misaligned polyominoes.

\begin{definition}
  A \emph{zipper-simulation} of the tile system $T$ by the tile system $T'$ is a mapping $\psi : \Zips_{T,N} \to \P(\chi_{\varphi}(\Ribs_{T'}))$, such that $\psi$ associates any $(T,N)$-zipper $(P,r)$ where $r : \range(P) \to T$ with a non-empty subset of poly-ribbons (hence ribbons) from $\set{(P,r') \st r' : \range(P) \to \Ribs_{T'}, r'(x,y) \in \varphi(r(x,y))}$; with the additional constraint that for all $T'$-ribbons $(P',r')$, there exists some unique valid $(T,N)$-zipper $(P,r)$ such that $\chi_{\varphi}((P',r')) \in \psi((P,r))$.
\end{definition}

%The reference is not required in this latter definition, because zippers and ribbons being connected, there is no risk of misalignment.
%The recovery procedure $\psi^{-1}$ then only consists in step \ref{it:simu_inv_1} above.
The recovery procedure proceeds as previously, by turning a $T'$-ribbon into a poly-ribbon with $\chi_{\varphi}$, and recovering the zipper tiles with $\varphi^{-1}$.

\smallskip

Provided a tiling-simulation $\psi$ exists, we say that $\T'$ \emph{simulates} $\T$ if $\chi_{\varphi}(\T') \in \psi(\T)$. Similarly, for a zipper-simulation $\psi$, the ribbon $(P',r')$ \emph{simulates} the zipper $(P,r)$ if $\chi_{\varphi}((P',r')) \in \psi((P,r))$.
Finally, when no ambiguity is possible, we simply use the term \emph{simulation} to refer to the simulation of tilings or zippers.

%%% naive

\subsection{Total Tilings Constructions}\label{sec:naive}

In~\cite{AKKR02}, the authors use a construction which can be adapted to the simulation of total tilings. Their idea is to replace each arbitrary-neighborhood tiles by von Neumann-neighborhood tiles, these new tiles (called \emph{supertiles} to avoid misunderstandings) being groupings of several of the initial tiles, such that supertiles group all the tiles which belong to the neighborhood of the initial tile. This leads to the following statement.

\begin{proposition}\label{prop:total}
  Let $T$ be a tile system in arbitrary neighborhood $N$, with set of glues $X$. There exist a simulation $\psi$ and a tile system $T'$ in neighborhood $\NVN$, with set of glues $X'$, such that any total $(T,N)$-tiling can be simulated by a total $(T',\NVN)$-tiling.
\end{proposition}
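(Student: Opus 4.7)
The plan is to apply the classical higher-block recoding technique from symbolic dynamics to the polyomino framework of Section~\ref{sec:def}. Let $k = \max\{|i|,|j| : (i,j) \in N\}$ be the radius of $N$. The idea is that each position of the plane will carry a \emph{window} encoding of all original tiles within distance $k$ of it, rather than just one original tile; overlapping windows at von Neumann-adjacent positions will then enforce exactly the $N$-neighborhood constraints of the initial system.

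I would work at scale $s = 1$, so that every polyomino in $\varphi(t)$ is a single new tile. The new tile system $T'$ is indexed by pairs $(t, \vec{w})$ with $t \in T$ and $\vec{w} : [-k,k]^2 \to T$ a ``window'' satisfying $\vec{w}(0,0) = t$ together with all $N$-constraints that lie entirely inside the window, i.e., $\vec{w}(p,q)_{i,j} = \vec{w}(p+i,q+j)_{-i,-j}$ whenever $(p,q),(p+i,q+j) \in [-k,k]^2$ and $(i,j) \in N$. Each of the four sides of $(t,\vec{w})$ carries a glue equal to the restriction of $\vec{w}$ to the strip of width $2k$ along that side, so that $(t,\vec{w})$ sticks to $(t',\vec{w}')$ on its east side iff $\vec{w}(p,q) = \vec{w}'(p-1,q)$ for all $(p,q) \in [1-k,k] \times [-k,k]$; the set $X'$ is the resulting set of strip-restrictions. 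I would then set $\varphi(t) = \{(t,\vec{w}) : \vec{w} \text{ is a valid window centered at } t\}$, with the unique tile at local position $(0,0)$ taken as the reference tile of each single-tile polyomino, and define $\psi(\T)$ to be the singleton containing the total tiling $\T' : (X,Y) \mapsto (\T(X,Y), \vec{w}_{X,Y})$ where $\vec{w}_{X,Y}(p,q) = \T(X+p,Y+q)$.

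Closing the proof amounts to verifying the conditions of Definition~\ref{def:tilingsimu}. Non-emptiness of $\psi(\T)$ and injectivity of $\varphi$ follow directly from the construction, since the center value $\vec{w}(0,0)$ already identifies $t$. Given any $(T',\NVN)$-tiling $\T''$, the map $\chi_\varphi$ is the identity (all polyominoes are complete, trivially aligned single tiles), and the decoding $(X,Y) \mapsto \vec{w}(0,0)$ yields a map $\T : \Z^2 \to T$. The main technical step is to show that this $\T$ is a valid $(T,N)$-tiling. The key equality is $\vec{w}_{X,Y}(i,j) = \vec{w}_{X+i,Y+j}(0,0)$ for $(i,j) \in N$, stating that the copy of the tile at $(X+i,Y+j)$ recorded inside $(X,Y)$'s window agrees with its authoritative copy at $(X+i,Y+j)$. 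I would establish this by chaining the glue-matching conditions along a monotone $|i|+|j|$-step path from $(X,Y)$ to $(X+i,Y+j)$, since $|i|,|j| \le k$ guarantees that the target position stays inside the strip overlap at each step. With this equality in hand, the internal $N$-constraint within $\vec{w}_{X,Y}$ immediately gives $\T(X,Y)_{i,j} = \vec{w}_{X,Y}(0,0)_{i,j} = \vec{w}_{X,Y}(i,j)_{-i,-j} = \T(X+i,Y+j)_{-i,-j}$, which is precisely validity. This chaining-through-overlaps argument is the only nontrivial step, and thus the main technical obstacle in the proof.
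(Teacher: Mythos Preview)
Your proposal is correct and is essentially the same supertile/higher-block recoding construction as the paper's own proof: work at scale $s=1$, let each new tile record the original tile together with its surrounding configuration, and use the von Neumann glues to enforce overlap consistency between adjacent windows. The only difference is cosmetic---you store a full $(2k+1)\times(2k+1)$ block while the paper stores only the positions in $N\cup\{(0,0)\}$; your choice makes the overlap-chaining argument (which the paper leaves implicit, illustrating only the Moore case) completely straightforward, at the price of a larger $\card{T'}$.
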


\begin{proof}
  Let $\T$ be a $(T,N)$-tiling. We replace tiles from $T$ by new tiles from $T' = T^{\card{N}+1}$ called \emph{supertiles}, each of them encoding all the tiles from $N$ and the tile itself. The 4 von Neumann glues encode the neighbors located at $N \cap \set{(i,j) \st i \geq 0}$ (East), $N \cap \set{(i,j) \st i \leq 0}$ (West), $N \cap \set{(i,j) \st j \geq 0}$ (North), and $N \cap \set{(i,j) \st j \leq 0}$ (South). Therefore, two supertiles stick if the matching glues encode the same tiles, allowing to transmit information to distant neighbors. Figure~\ref{fig:supertiles} illustrates this construction in the case of Moore neighborhood $N_M$.
  In this figure, we use the following notation: given a position $(x, y) \in \Z^2$, let tile $t = \T(x,y) \in T$. The symbol $N(t)$ [resp., $S(t)$, $E(t)$, $W(t)$] represents the tile $\T(x,y+1)$ [resp., $\T(x,y-1)$, $\T(x+1,y)$, $\T(x-1,y)$]; and denote $XY = X \circ Y$ the composition of any of the functions $X,Y \in \set{N,S,E,W}$.
  \begin{figure}[!ht]
  \begin{center}
    \includegraphics{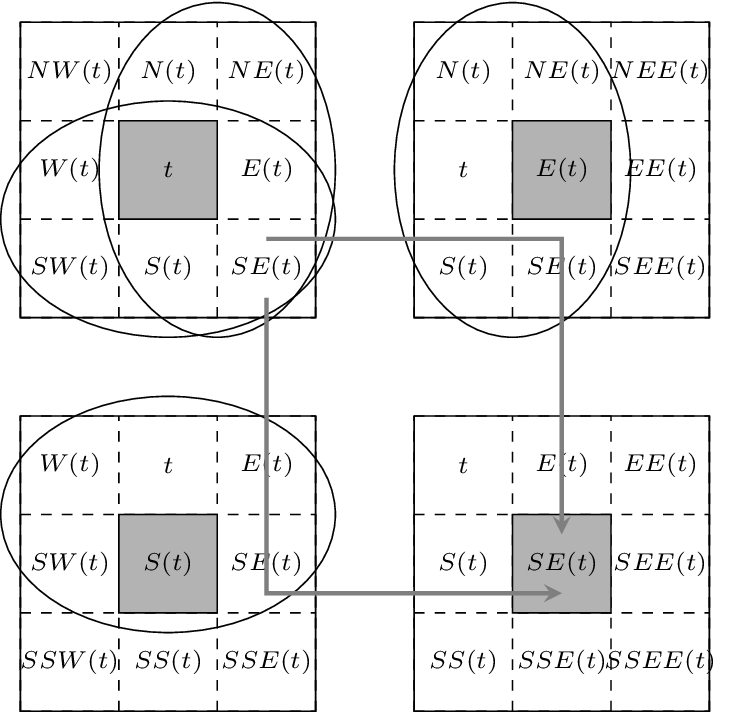}
    \caption{Supertiles assembly for simulating neighborhood $N_M$. The glues consist of the circled elements, they are used to transmit the information diagonally, as indicated by the arrows.}
    \label{fig:supertiles}
  \end{center}
  \end{figure}

In this case, the polyominoes simulating the original tiles are made of only one supertile. Besides, for each tile $t \in T$, the set $\varphi(t)$ contains all the one-tile tilings consisting of a supertile centered on $t$, with all the admissible neighbors of $t$. The function $\chi_{\varphi}$ is the identity, since all polyominoes are made of one supertile, they are necessarily complete and aligned. Any total tiling $\T \in \Tils_{T,N}$ is then simulated by an appropriate poly-tiling $\psi(\T)$, seen as a total $(T',\NVN)$-tiling $\T' \in \Tils_{T',\NVN}$.

The converse is trivial, any total $(T',\NVN)$-tiling can be translated back into a unique valid total $(T,N)$-tiling by selecting only the central symbol.
\qed\end{proof}

Remark that this technique can not be applied directly to the simulation of total zippers. Indeed, one would need to be able to transfer an arbitrary amount of information when simulating zippers such as the one represented in Fig.~\ref{fig:cexad}, since this amount would depend on the length of the zipper before it goes back.
\begin{figure}[!ht]
  \begin{center}
  \includegraphics{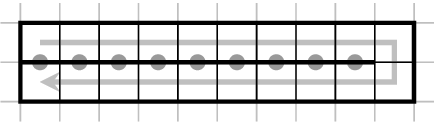}
  \caption{Von Neumann-neighborhood zipper, going backwards. The longer it is, the more information (represented by the circles) would need to be encoded in the supertiles.}
  \label{fig:cexad}
  \end{center}
\end{figure}

Also note that this simulation does not work for partial tilings, since in the absence of intermediate supertiles the information can not be transmitted. For example, in Fig.~\ref{fig:supertiles}, if the top-right and bottom-left supertiles are omitted, it is not possible to ensure that the bottom-right tile is centered on $SE(t)$.

This issue can be partially solved~\cite{W09}, by introducing a new symbol ``no tile'' in the supertiles, when a tile is absent. In this case, there will be supertiles at each position of the plane (possibly consisting only of symbols ``no tile''). However, this would cause every partial tiling to be simulated by a total tiling, which is not suitable in most applications.

%%%%%%%%%%%%%%%%%%%%%%%%%%%% CONSTRUCTIONS %%%%%%%%%%%%%%%%%%%%%%%%%%%%

\section{Simulating Arbitrary-Neighborhood Zippers by Ribbons}\label{sec:ziprib}

Here, we prove that there exists a simulation such that any zipper, using a tile system in arbitrary neighborhood, can be simulated by a ribbon using an appropriate tile system in von Neumann neighborhood. The objective of our constructions will be to define the function $\varphi$, in such a way that the polyominoes it produces are unique and do not overlap if and only if the initial tiles of the zipper stick.

The final construction being quite complex, we introduce the technical difficulties progressively. First, we recall known results which present the basic principles of the simulations, and solve the problem of crossings. Then, we deal with linear neighborhoods (all neighbors are on the same line), and finally we prove the general result in Corollary~\ref{cor:main}.

%%% preliminary (VN, Moore)

\subsection{Preliminary Results}\label{sec:VNM}

First we recall basic results of simulations of zippers.  In~\cite{AKKRS09}, the authors prove a fundamental result on the simulation of zippers by ribbons. They describe a method used to simulate bi-infinite zippers using ``directed'' tiles in von Neumann neighborhood by ribbons. The result can be extended to arbitrarily long zippers and standard tiles, as recalled here.

\begin{theorem}[\cite{AKKRS09}]\label{th:VN}
  Let $T$ be a tile system in neighborhood $\NVN$, with set of glues $X$. There exist a simulation $\psi$ and a tile system $T_{\mu}$ in neighborhood $\NVN$, with set of glues $X_{\mu}$, such that any $(T,\NVN)$-zipper can be simulated by a $T_{\mu}$-ribbon.
\end{theorem}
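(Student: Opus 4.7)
The plan is to adapt the motif (polyomino) construction of~\cite{AKKRS09}. Fix a scale $s$ large enough to encode all needed geometric information; roughly $s$ proportional to $\card{X}$ suffices. For each tile $t \in T$ and each ordered pair $(d_{in}, d_{out})$ where each of $d_{in}, d_{out}$ is either in $\NVN$ or a marker $\star$ denoting ``absent'' (used only for the endpoints of a zipper), with $d_{in} \ne d_{out}$, I would construct a distinct motif polyomino $M_{t, d_{in}, d_{out}}$ fitting inside an $s \times s$ box. The set $\varphi(t)$ is the collection of all these motifs for the given $t$, and $T_{\mu}$ consists of all the micro-tiles appearing across all motifs, with glues $X_{\mu}$ chosen disjointly across distinct original tiles so that the uniqueness requirement on $\varphi$ from the definition of zipper-simulation is satisfied.

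Each motif is shaped so that on each of its four outer sides, a pattern of bumps and matching dents uniquely encodes the glue of $t$ on that side, in such a way that two motifs placed on horizontally or vertically adjacent cells of the $s$-grid fit together without overlap exactly when the corresponding original tiles stuck along that edge; a ``port'' on the $d_{in}$ side and another on the $d_{out}$ side serve as entry and exit for the micro-ribbon (absent when $d_{in}$ or $d_{out}$ is $\star$); and the interior consists of micro-tiles whose glues realize a single valid $T_{\mu}$-ribbon running from the input port to the output port. The simulation $\psi$ then maps a zipper $(P,r)$ to the set of poly-ribbons obtained by replacing each $r(P(i))$ by the motif for that tile with entry/exit directions read off from $P(i-1)$ and $P(i+1)$; by the analogue of Remark~\ref{rem:macrotiling} for poly-ribbons, this is a $T_{\mu}$-ribbon in the usual sense.

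Verification splits into three checks. First, consecutive motifs meet port-to-port by construction, so concatenating their internal micro-ribbons produces a single valid ribbon consistent with the macro-path. Second, two motifs corresponding to non-consecutive positions of the zipper whose $s$-cells happen to be adjacent in the plane must not overlap; the bump/dent encoding guarantees that a fit without overlap is possible if and only if the two original tiles' glues matched along that edge, and this holds because the underlying map $r$ of a zipper is a full $(T,\NVN)$-tiling on $\range(P)$, not merely a tiled path. Third, for recovery, $\chi_{\varphi}$ discards incomplete or misaligned motifs, $\varphi^{-1}$ reads off each original tile from the shape of each remaining motif, and $(d_{in}, d_{out})$ can be read from port positions, so $(P,r)$ is reconstructed uniquely and the final clause of the zipper-simulation definition is satisfied.

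The main obstacle I anticipate is the geometric design itself: the four sides of each motif must simultaneously carry glue-encoding signatures and, on up to two of those sides, accommodate ports for the micro-ribbon, without the ports disturbing the glue encoding or allowing two mismatched motifs to unintentionally interlock. In the $\NVN$ case this is manageable because there are no diagonal neighbors, so each side of an $s \times s$ box interacts with a single independent neighbor and the signatures on different sides do not interfere. It is precisely the appearance of neighbors ``beyond'' the four von Neumann directions --- where, for instance, a motif must communicate with a diagonal neighbor across a corner shared with other motifs --- that will force the more elaborate, crossing-handling constructions of the subsequent subsections, as foreshadowed in the introduction.
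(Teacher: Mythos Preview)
Your proposal is correct and essentially identical to the paper's proof: replace each tile by a contour-tracing motif polyomino whose bump/dent pattern encodes the four glues, realize the motif itself as a micro-ribbon between input and output ports, and recover via $\chi_\varphi$ and $\varphi^{-1}$. The only cosmetic difference is that the paper handles zipper endpoints not with a special $\star$ marker but by allowing any of the standard $4\times 3 = 12$ motifs there (the unused port direction being arbitrary), and it is explicit that the micro-ribbon \emph{is} the contour of the bumpy shape (plus an inner joining layer), so the bumps and dents are segments of the ribbon path rather than a separately described boundary.
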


\begin{proof}
  The key of the proof is the construction of the simulation $\varphi$ which associates each tile with polyominoes. The basic idea behind these polyominoes is to replace every tile from $T$ by a unique shape (Figs.~\ref{fig:simuVN}\subref{fig:simuVN1} and~\ref{fig:simuVN}\subref{fig:simuVN2}). Glues are replaced by \emph{bumps} (the tile is raised) and \emph{dents} (the tile is dug), a different glue leading to a different bump or dent. A way to uniquely code the glues is to change the vertical or horizontal position of the bump or of the dent, depending on the glue. Then, like in a jigsaw puzzle, two shapes stick if their adjacent bump and dent fit into each other. Therefore, a ``ribbon'' of these shapes would simulate a $(T,\NVN)$-zipper, since the bumps and dents imply that the sides unconstrained by the ribbon have to match\ignore{ (for example in Fig.~\ref{fig:ribzip}\subref{fig:ribbon}, the bump $n$ would have to fit the dent $f$, enforcing $n=f$)}.
  
  The second step of the construction of $\varphi$ is the definition of a new tile system $T_{\mu}$, in von Neumann neighborhood and with a new set of glues, which will be used to build the polyominoes simulating the initial tiles from $T$. This leads to a path $P$, starting from the middle of the side corresponding to the input direction, and leaving at the middle of the side given by the output direction (Fig.~\ref{fig:simuVN}\subref{fig:simuVN3}). This path draws the contour of the shape, including bumps and dents, while the central part of the path is used to reconnect the input and output sides. Its position is determined by the point at the Southwest corner for example, which we choose as the reference tile located at $(0,0)$. The path is ``filled'' by new tiles from the set $T_{\mu}$ (we call these \emph{microtiles} to avoid misunderstandings) using a mapping $r : \range(P) \to T_{\mu}$. The finite $T_{\mu}$-tiled path $(P,r)$ is the polyomino associated with the initial tile. Remark that for one initial tile, there can be 12 different polyominoes, corresponding to the same shape but with 4 possible input and 3 possible output directions.
  
\begin{figure}[!ht]
  \subfloat[Tile of the original zipper with its four glues.]{\label{fig:simuVN1}
    \includegraphics{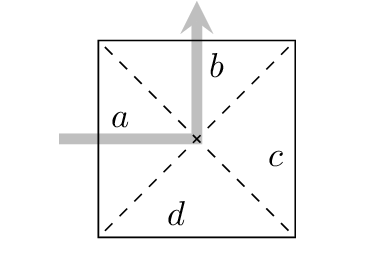}
}
  \hfill
  \subfloat[Coding of the glues into bumps and dents.]{\label{fig:simuVN2}
    \includegraphics{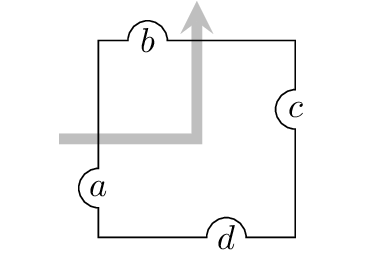}
}
  \hfill
  \subfloat[Polyomino used to simulate the original tile.]{\label{fig:simuVN3}
    \includegraphics{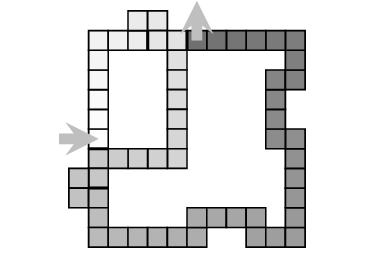}
}
  \caption{Simulation of a von Neumann-neighborhood zipper using a ribbon of microtiles. The three steps of the simulation of a single tile are represented, in each one the gray arrow indicates the direction of the underlying path of the zipper.}
  \label{fig:simuVN}
\end{figure}

  The polyomino should be a ribbon, so we should give some details about its construction. The first microtile (called the input microtile) has its input side colored with glue $(g,d)$, where $g$ is the input glue of the initial tile and $d$ the direction of the path (West-to-East, etc.); similarly the output side of the output microtile encodes the glue of the output side of the initial tile and the direction. For all other microtiles, the input matches the output of the previous tile, so that our polyomino $(P,r)$ is a $T_{\mu}$-ribbon. The input and output glues are unique among all the polyominoes, so this polyomino is the only possible ribbon using the tile system $T_{\mu}$, once the input microtile is given. Besides, to ensure that no interference occurs, the two sides which are not colored yet have a glue that matches nothing (for example glue $null_1$ on the West or North sides, $null_2$ on the East or South sides).

  For a tile $t\in T$, the set $\varphi(t)$ contains the 12 polyominoes described above. Then, obviously, if a $T$-tiled path is a $(T,\NVN)$-zipper, one can find a ``unique'' poly-ribbon consisting of the catenation of polyominoes constructed by $\varphi$, which is a $T_{\mu}$-ribbon. It is not really unique, since at the extremities of a finite zipper, 3 different polyominoes corresponding to the 3 possible input (or output) directions are admissible.
  
  Conversely, because of the careful design of the glues from $X_{\mu}$, any $T_{\mu}$-ribbon can be seen as a poly-ribbon, since the glues forming the polyominoes appear only once and guarantee that only polyominoes can be formed. A $T_{\mu}$-ribbon may have up to two incomplete polyominoes at the extremities, but if we cut them off using $\chi_{\varphi}$, then it can be represented as a unique poly-ribbon. Then, if a poly-ribbon using tile system $T_{\mu}$ exists, Definition~\ref{def:macrorib} implies that
\begin{itemize}
  \item all polyominoes do not overlap (condition~\ref{item:macrorib2}), hence that the glues they simulate match on the four sides;
  \item the polyominoes stick on their input/output tiles (condition~\ref{item:macrorib3}), hence that the polyominoes follow a path.
\end{itemize}
Using this path and these glues, one can uniquely restore the initial $(T,\NVN)$-zipper.
\qed\end{proof}

The following remark states an important property of our construction. In fact, the simulation $\psi$ we just constructed can be considered as bijective.

\begin{remark}\label{rem:bij}
  Let $\R$ be the equivalence relation which states that two $T_{\mu}$-ribbons are equivalent if they represent the same $(T,\NVN)$-zipper. Then, the simulation $\psi$ is a bijection between the set of $(T,\NVN)$-zippers $\Zips_{T,\NVN}$ and the set of $T_{\mu}$-ribbons quotiented by $\R$, $\Ribs_{T_{\mu}/\R}$.
\end{remark}

  In addition, the polyominoes used in the above simulation are rectilinear polyominoes, \ie a simple sequence of tiles outlining a shape. These are a particular case of general polyominoes, making the simulation more powerful.

\begin{remark}\label{rem:VNtotal}
  Theorem~\ref{th:VN} allows to transfer information on a ribbon in all 4 directions. This result, in conjunction with Proposition~\ref{prop:total}, allows a simple construction in the case of arbitrary-neighborhood total zippers, by first reducing to a von Neumann-neighborhood total zipper with the supertiles technique, and then simulating it by a two-tile-neighborhood ribbon.
\end{remark}

The simulation of a zipper in Moore neighborhood $N_M$ (\ie adding diagonal communications) is more complex, because diagonal glues cross, as illustrated in Fig.~\ref{fig:Mcross}.

\begin{figure}[!ht]
  \begin{center}
  \includegraphics{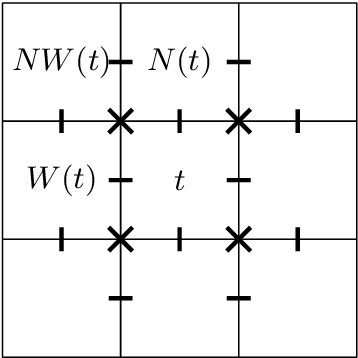}
  \caption{Communication between tiles in Moore neighborhood.} 
  \label{fig:Mcross}
  \end{center}
\end{figure}

The consequence is that diagonal bumps would also have to cross each other. This issue is solved in~\cite{CK10} using a method that we summarize here, because it will turn out to be useful in our constructions.

\begin{theorem}[\cite{CK10}]\label{th:moore}
  Let $T$ be a tile system in neighborhood $N_{M}$, with set of glues $X$. There exist a simulation $\psi$ and a tile system $T_{\mu}$ in neighborhood $\NVN$, with set of glues $X_{\mu}$, such that any $(T,N_M)$-zipper can be simulated by a $T_{\mu}$-ribbon.
\end{theorem}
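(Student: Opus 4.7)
The plan is to extend the polyomino construction of Theorem~\ref{th:VN} by equipping each polyomino with four additional ``diagonal arms'', one per diagonal neighbour, and then solve the planar routing problem that arises at every lattice corner. As in Theorem~\ref{th:VN}, I would start from a rectilinear polyomino of microtiles whose four cardinal sides carry bumps and dents encoding the four $\NVN$ glues of $t$; the contour is filled by a path $P$ that enters on the input side and leaves on the output side, exactly as in Fig.~\ref{fig:simuVN}.

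Next I would extend this polyomino with a thin finger of microtiles at each of its four corners, protruding into the corner region shared with the three other polyominoes that meet there, and carrying a bump or dent at its tip that encodes the corresponding diagonal glue. In a valid assembly, the arm of $t$ pointing towards the NE is meant to interlock with the arm of the NE neighbour pointing towards its SW, so that the two fit iff the diagonal glues of $t$ and of its NE neighbour agree; and similarly for the three other diagonals.

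The technical heart of the argument is to choose the shapes of these arms. At every lattice corner four polyominoes meet, namely $t$ in the SW position, $N(t)$ in the NW, $E(t)$ in the SE, and $NE(t)$ in the NE; there are two disjoint diagonal communications to realise, $\{t, NE(t)\}$ and $\{N(t), E(t)\}$. Because the two straight diagonals of a square cross (Fig.~\ref{fig:Mcross}), the arms must dogleg. I would therefore reserve, at every corner region, two disjoint L-shaped ``lanes'' of microtile positions, one per diagonal pair, and require each of the four polyominoes meeting at that corner to leave those lanes empty. Crucially, this reservation depends only on the polyomino itself and not on which of its neighbours happen to be present, so it can be hard-coded into the definition of $\varphi$; a single globally symmetric allocation (say, the pair $\{t, NE(t)\}$ always occupies the upper lane and $\{N(t), E(t)\}$ the lower one) keeps the four ``views'' of each polyomino consistent across its four corners.

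Finally, the contour together with the four arms forms a single rectilinear path $P$, filled by microtiles from a fresh tile system $T_\mu$; unique input and output glues are chosen as in Theorem~\ref{th:VN}, and every remaining exposed side is given a null glue that matches nothing. The set $\varphi(t)$ then consists of the finitely many polyominoes obtained by varying the input and output directions, and any $T_\mu$-ribbon, once trimmed by $\chi_\varphi$, decomposes uniquely as a catenation of such polyominoes. The non-overlap condition~\ref{item:macrorib2} of Definition~\ref{def:macrorib} then forces adjacent and diagonally adjacent polyominoes to agree on every bump and dent, which is precisely the matching of all eight Moore glues; hence the induced map $\psi$ is a zipper-simulation. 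I expect the main obstacle to be the concrete arm-routing in step three, which must simultaneously be planar, non-crossing, and consistent across the four polyominoes that share each corner as well as across the four corners of each polyomino; picking the scale $s$ large enough should make the remaining verifications routine.
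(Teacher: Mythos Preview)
There is a genuine gap in the routing step. At each lattice corner the four polyominoes enter with terminals in cyclic order SW, NW, NE, SE around the corner point, and you want to connect SW--NE and NW--SE by disjoint lanes inside that region. But these are crossing chords of that cyclic order: any Jordan arc from the SW terminal to the NE terminal separates NW from SE inside the disk, so any arc from NW to SE must meet it. Replacing straight diagonals by L-shapes does not change the homotopy class of the connections; your ``upper lane / lower lane'' allocation fails for the same reason, since the arm of $t$ enters from the SW and must traverse the lower lane to reach the upper one, exactly where the $N(t)$--$E(t)$ arms are supposed to live. Equivalently, when the four polyomino regions tile a neighbourhood of the corner, at most one of the two diagonal pairs can share a boundary edge. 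So at most one diagonal per corner can be checked by directly interlocking arms, and ``similarly for the three other diagonals'' cannot be made to work as stated.

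The paper's construction accepts this obstruction and treats the two diagonals asymmetrically. The polyomino shape (Fig.~\ref{fig:moore}) is chosen so that it is physically adjacent to its NE and SW diagonal neighbours, and that diagonal is checked by an ordinary bump/dent on the shared edge. The NW--SE diagonal is handled by a different mechanism: the bump of $t$ and the dent of $NW(t)$ are designed to be compatible \emph{without touching}---the bump is three microtiles high, the dent three deep and eight wide---and the two-layer gap between them is filled, whenever $W(t)$ happens to be present, by freely positionable \emph{bridges} belonging to $W(t)$ that carry the $W(t)$--$N(t)$ communication across (Fig.~\ref{fig:crossing2}). The bump fits inside the dent without overlap iff the $t$--$NW(t)$ glues agree, irrespective of whether $W(t)$ is there. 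This shape-compatibility-at-a-distance is the idea your proposal is missing; a side effect is that $\varphi(t)$ contains $12\card{X}$ polyominoes rather than $12$, one per possible bridge position, since the bridge must be able to relay any glue.
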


\begin{proof}
  The global idea of the simulation is the same as for the proof of Theorem~\ref{th:VN}, we need to define a function $\varphi$ for all tiles from $T$. In Figs.~\ref{fig:moore}\subref{fig:moore1} and~\ref{fig:moore}\subref{fig:moore2} we present a picture of the shape that can be used to simulate an initial tile of $T$. It differs slightly from the shape described in~\cite{CK10}, but the idea is the same and this new shape is an introduction to our results.
  
  \begin{figure}[!ht]
    \begin{center}
      \subfloat[A single shape. Next to the communication places is written the neighbor it communicates with.]{\label{fig:moore1}
        \includegraphics{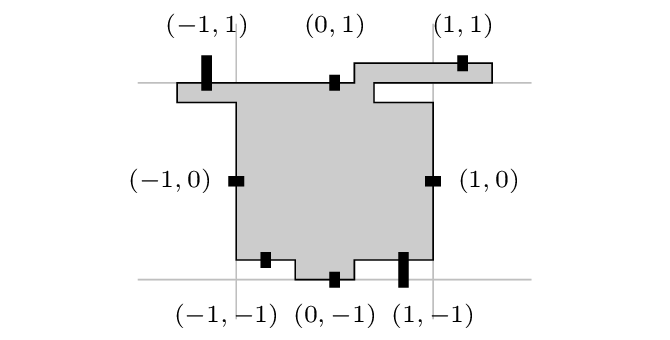}
      }
      \hfill
      \subfloat[The shape and its 8 neighbors.]{\label{fig:moore2}
        \includegraphics{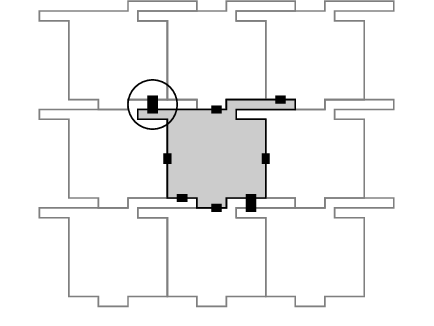}
      }
    \end{center}
    \caption{Shape used to simulate a tile in Moore neighborhood, filled with gray for clarity sake. The communication bumps and dents are represented by short thick lines.}
    \label{fig:moore}
  \end{figure}
  
  This shape is turned into a polyomino using new microtiles from $T_{\mu}$, taking into account the input and output directions. Then, all but one of the communications can be done as previously, by modifying the position of bumps and dents to simulate different glues. The only issue is the communications between the Northwest and Southeast neighbors (circled on Fig.~\ref{fig:moore}\subref{fig:moore2}). We have to be able to relay information about these diagonal glues without the physical touch between edges of tiles. According to the notation from Fig.~\ref{fig:Mcross}, we need to check the glue compatibility between the central tile $t$ and its Northwest neighbor $NW(t)$, as well as between the tiles $N(t)$ and $W(t)$.  This can be accomplished by a geometrical construction such as the one in Figs.~\ref{fig:crossing2}\subref{fig:cross1a} and~\ref{fig:crossing2}\subref{fig:cross1b}.

\begin{figure}[!ht]
  \begin{center}
%  \hfill
  \subfloat[Glues matching between $t$ and $NW(t)$.]{\label{fig:cross1a}
    \includegraphics{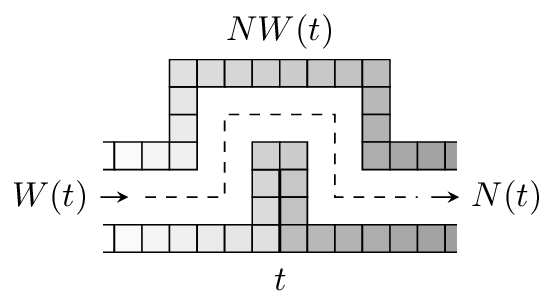}
  }
  \hfill
  \subfloat[Glues not matching: the ribbons overlap.]{\label{fig:cross1b}
    \includegraphics{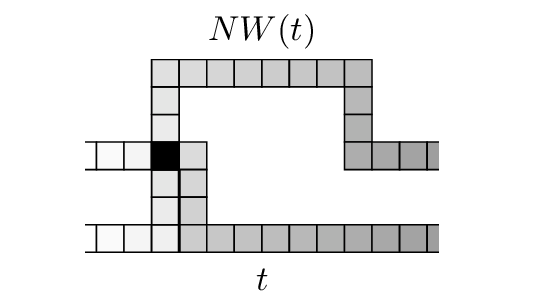}
  }
%  \hfill\null
  \caption{``Crossing'' of information. The top ribbon segment is part of $NW(t)$, the bottom ribbon segment is part of $t$, and the space in between can be filled by
2 layers of microtiles for $W(t)$ to actually reach $N(t)$.}
  \label{fig:crossing2}
  \end{center}  
\end{figure}

This construction checks the match between the $W(t)$ and $N(t)$ in the old-fashioned way, by physically matching the adjacent bump and dent of the corresponding polyominoes. The novelty of this construction is that the glue-match between $t$ and $NW(t)$ is accomplished without the respective polyominoes ever touching. Moreover, this construction works even in the case of partial tilings where the tile $W(t)$ might be missing. This is accomplished by carefully designing the shape and space between the bump and the dent so that, whether or not the tile between these polyominoes is present, their shapes will be compatible and not overlap if and only if the glues of the corresponding tiles were compatible. In order to achieve this and cross the 2 layers of microtiles forming the polyomino $W(t)$, the bump of $t$ should be 3 microtiles high, and the dent of $NW(t)$ should be 8 microtiles wide and 3 microtiles deep. In the sequel, the inner layers of $W(t)$ are called \emph{bridges}.
 
For any tile $t \in T$, if a polyomino in the set $\varphi(t)$ has a bridge, then $\varphi(t)$ should contain the same polyomino with the bridge at all possible locations, matching all possible glues. Indeed, the bridges do not participate in a communication, they are just a kind of ``relay'' which should be able to match any glue. Therefore, $\varphi(t)$ contains $\card{X}$ copies of each of the $12$ polyominoes given by the above construction, for the $\card{X}$ possible locations of the bridges (one per glue).

The end of the proof, which consists in restoring a $(T,N_M)$-zipper from any $T_{\mu}$-ribbon, is similar to the proof of Theorem~\ref{th:VN} and is left to the reader.
\qed\end{proof}

\begin{remark}\label{rem:adleman}
A simpler construction~\cite{A01} using ``pitcher-tiles'' (see Fig.~\ref{fig:pitcher}) solves the problem of information crossing when simulating Moore-neighborhood zippers by ribbons, but only when the zipper is a \emph{total} tiling. In that case, one could use (with the notation from Fig.~\ref{fig:pitcher}) the spike of the $W(t)$ polyomino which conveys information to $N(t)$ as ``information carrier'' to transmit information from $t$ to $NW(t)$. This construction does not work as such in the case of zippers which are partial tilings, because the ``carrier'' $W(t)$ tile might be altogether absent.

\begin{figure}[!ht]
  \begin{center}
  \includegraphics{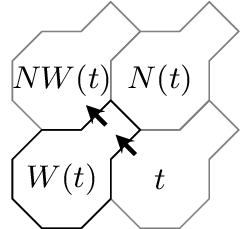}
  \caption{``Pitcher-tiles'' used to relay diagonal information, for total-tiling zippers only.}
  \label{fig:pitcher}
  \end{center}
\end{figure}  

This idea is similar to the one suggested in Remark~\ref{rem:VNtotal}, since it uses intermediate polyominoes to relay information. The main difference is that the polyomino-based simulation is more general and can be used for arbitrary neighborhoods, while these pitcher-tiles were designed specifically for the Moore neighborhood.
\end{remark}

%%% linear

\subsection{Simulating Arbitrary Linear Neighborhoods}\label{sec:linear}

We call \emph{linear} neighborhood a neighborhood $N$ such that if $(i,j) \in N$ then $j=0$. Although this is a sub-case of the general case studied in the next section, we will explain it in detail since it lays the base of the general study, and it is much simpler to describe and understand. Also note that this result was already announced in~\cite{CK09}, but the polyomino used there was not easily generalizable to arbitrary neighborhoods. First, we state an initial remark which allows us to consider only \emph{connected} linear neighborhoods $N_n = \set{(i,0) \in \Z^2 \st 0 < \abs{i} \leq n}$.

\begin{remark}\label{rem:1d}
  For a given set of glues $X$, any tile system $T$ defined in linear neighborhood $N$ can be replaced by an equivalent tile system $T'$ in an appropriate connected linear neighborhood $N_n$. Indeed, let $n$ be such that $N \subset N_n$, let $g \in X$ be an arbitrary ``dummy'' glue and $T' = \zeta_g(T)$ a tile system in neighborhood $N_n$, where $\zeta_g : T \to T'$ is defined for all $t \in T$ and $(i,j) \in N_n$ by
  \[
    \zeta_g(t)_{i,j} = \begin{cases}
      t_{i,j} & \text{if $(i,j) \in N$,}\\
      g & \text{otherwise.}
    \end{cases}
  \]
Then, clearly, $\T : D \to T$ is a $(T,N)$-tiling if and only if $\T' : D \to T'$ defined by $\T'(x,y) = \zeta_g(\T(x,y))$ is a $(T',N_n)$-tiling.
\end{remark}

First, we state a lemma which generalizes to an arbitrary number of inner layers the crossing operation detailed in the proof of Theorem~\ref{th:moore}. The gadget introduced in this lemma will be helpful in the next constructions.

\begin{lemma}\label{lem:crossing}
  In order to fit a bump and a dent spaced by $k$ layers, the bump must be $k+1$ microtiles high, the dent must be $2k+4$ microtiles wide and $k+1$ microtiles deep.
\end{lemma}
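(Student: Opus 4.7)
My plan is to generalize the crossing construction from the proof of Theorem~\ref{th:moore}, where $k = 2$, to arbitrary $k$. I would describe the geometric picture: a bump sits on the top edge of the ``lower'' polyomino (playing the role of $t$), a matching dent is carved into the bottom edge of the ``upper'' polyomino (the distant neighbor whose glue has to be verified), and $k$ horizontal layers of microtiles---the bridges---lie between them, belonging to intermediate polyominoes that the diagonal communication must traverse.

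For each dimension I would then argue separately. The bump must be $k+1$ high because it starts at the edge of its polyomino and must cross all $k$ bridge layers to reach the dent; for its horizontal position (which encodes the glue) to be compared against the dent's walls, the bump must penetrate the dent by at least one microtile, since otherwise its top merely sits flush with the dent's mouth and no mismatch can be detected. The dent must be $k+1$ deep because, inside the dent, each of the $k$ bridge layers has to route around the bump and therefore takes up one microtile of vertical space, to which we add the single microtile of bump that protrudes into the dent. The dent must be $2k+4$ wide because the bump has width $2$ (for its encoding position and a minimal outline), and on each side of the bump the $k$ bridges make their horizontal detour, requiring $k$ microtiles of clearance plus a one-microtile margin against the dent's inner wall to prevent unintended sticking; this gives $2 + 2(k+1) = 2k+4$.

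I would finish by confirming correctness: the bump fits exactly into the dent precisely when both are in matching horizontal positions (equal glues), any mismatch forces the bump against a dent wall producing an overlap, and the gadget continues to work when some of the intermediate bridge polyominoes are absent, thereby preserving the partial-tiling friendliness of Theorem~\ref{th:moore}. The main obstacle I expect is pinning down the width $2k+4$ precisely: the constants (bump width $2$, one-microtile wall margin) are tied to the surrounding shape conventions of the microtile system, and one has to verify carefully that the bridges can physically make their detour around the bump in the claimed space without creating ambiguities---both when the glues match and when they do not.
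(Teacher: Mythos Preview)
Your approach is essentially the same as the paper's: generalize the $k=2$ crossing gadget of Fig.~\ref{fig:crossing2}\subref{fig:cross1a} to an arbitrary number~$k$ of intermediate layers. The paper's own proof is a single sentence to that effect (``immediate generalization of Fig.~\ref{fig:crossing2}\subref{fig:cross1a} to~$k$ white layers instead of~$2$''), so your dimension-by-dimension justification is already more than the paper supplies; just be aware that the precise accounting for the width constant $2k+4$ ultimately rests on the drawing conventions of that figure rather than on an independent argument.
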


\begin{proof}
  The result is obtained by an immediate generalization of Fig.~\ref{fig:crossing2}\subref{fig:cross1a} to~$k$ white layers instead of~$2$.
\qed\end{proof}

We now prove the main result of this section for neighborhoods $N_n$.

\begin{theorem}\label{th:1D}
  Let $T$ be a tile system in connected linear neighborhood $N_n$, with set of glues $X$. There exist a simulation $\psi$ and a tile system $T_{\mu}$ in neighborhood $\NVN$, with set of glues $X_{\mu}$, such that any $(T,N_n)$-zipper can be simulated by a $T_{\mu}$-ribbon.
\end{theorem}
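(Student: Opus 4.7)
The plan is to generalize the shape-based construction of Theorems~\ref{th:VN} and~\ref{th:moore}, turning every tile $t\in T$ into a polyomino that communicates with all $2n$ horizontal neighbors simultaneously. The template is a ``body'' polyomino carrying, on its top edge, $n$ independent communication strips stacked at heights $h_1<h_2<\cdots<h_n$, one per distance $d\in\{1,\ldots,n\}$. In strip $d$, the polyomino exhibits a bump on its right-hand portion whose horizontal position encodes the glue $t_{d,0}$, and a dent on its left-hand portion whose position encodes the expected value of $t_{-d,0}$.

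The strip-$d$ bump of the polyomino at position $x$ must reach the strip-$d$ dent of the polyomino at position $x+d$, crossing the bodies of the $d-1$ intermediate polyominoes. I would apply Lemma~\ref{lem:crossing} here: writing $L$ for the number of microtile layers that each intermediate body contributes (in our construction, $L=2$ as in Theorem~\ref{th:moore}), the strip-$d$ bump is taken to be $(d-1)L+1$ microtiles high with a correspondingly wide and deep dent. The strict height ordering $h_1<\cdots<h_n$ ensures that a strip-$d$ bump clears all strips with smaller index; choosing the scale $s$ large enough accommodates all $n$ strips in width and height within a single contour.

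Intermediate polyominoes cannot be ``empty'' in the strips they are traversed by. As in the proof of Theorem~\ref{th:moore}, the polyomino of a tile at position $x$ must contain \emph{bridges} in strip $d$ at each of the $d-1$ positions corresponding to the communications $x-k \to x-k+d$ for $k=1,\ldots,d-1$. A bridge is a passive gap tailored to accept whichever glue-coded bump passes through, so $\varphi(t)$ contains one polyomino for each combination of bridge positions, giving $12\cdot|X|^{\binom{n}{2}}$ polyominoes per tile (the factor $12$ accounts for the $4$ input and $3$ output zipper directions inherited from Theorem~\ref{th:VN}; the exponent $\binom{n}{2}=\sum_{d=1}^{n}(d-1)$ counts bridges across all strips). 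Each contour is then traced by a unique $T_\mu$-ribbon of microtiles, with input and output glues encoding the incoming/outgoing zipper directions exactly as in Theorem~\ref{th:VN}.

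The forward direction of the simulation is immediate: a valid $(T,N_n)$-zipper yields a catenation of these polyominoes whose shapes fit by the bump/dent design. The converse (recovery via $\chi_\varphi$) follows the pattern of Theorems~\ref{th:VN} and~\ref{th:moore}: any $T_\mu$-ribbon decodes uniquely into polyominoes from $\varphi$ because microtile glues are unique per shape, and Definition~\ref{def:macrorib}\ref{item:macrorib2} forces the polyominoes not to overlap, which by the crossing gadget is equivalent to the matching of every $N_n$-glue of the underlying tiles. The main obstacle I anticipate is the combinatorial layout of horizontal positions within each strip: a single polyomino must host its own $n$ active bumps and dents together with the $\binom{n}{2}$ passive bridges without lateral collisions, even when several long-range bumps traverse it simultaneously. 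This is a bookkeeping task, solved by allocating horizontal widths inductively from $d=1$ up to $d=n$ and choosing $s$ polynomially in $n$ and $|X|$; once this allocation is fixed, the rest of the argument proceeds exactly as in the previous theorems.
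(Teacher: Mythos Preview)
Your list of ingredients is right (bump/dent glue encoding, Lemma~\ref{lem:crossing} for crossings, bridges in the intermediate polyominoes, the $12$-fold input/output choice, recovery via $\chi_\varphi$), but the geometric template does not work as stated. You place every distance-$d$ communication in a strip at a \emph{fixed height} $h_d$, identical for every polyomino. Then the strip-$d$ feature of the polyomino at position $x$ (whose bump must reach the dent at $x+d$) and the strip-$d$ feature of the polyomino at $x+1$ (reaching $x+1+d$) are both at height $h_d$ and both must occupy the horizontal range between $x+1$ and $x+d$: they collide, and so do all the bridges you place there. Your sentence ``the strict height ordering ensures that a strip-$d$ bump clears all strips with smaller index'' addresses only interference among different strips of the \emph{same} polyomino; it says nothing about strip $d$ of $x$ versus strip $d$ of $x+1$, and no amount of horizontal bookkeeping inside one polyomino resolves a collision between two distinct polyominoes forced onto the same track.

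The paper's construction is geometrically different: each polyomino sends a single \emph{diagonal spike} of slope $\pm 1/l$ to the right, long enough to reach the neighbor at distance $n$. Because the spike descends as it travels, at any fixed column the spikes originating from $x, x+1, \ldots$ lie at \emph{distinct} heights and therefore nest without overlap. The bump for $t_{i,0}$ sits on the spike in horizontal part $i$, the dent for $t_{-i,0}$ sits inside the body (part $0$), and the $i-1$ nested spikes in between furnish exactly the layers to which Lemma~\ref{lem:crossing} applies with $k=4(n-1)$; a short system of inequalities then fixes $l$, $s=4l$, $h=2l-4n$. This diagonal spike --- rather than a stack of constant-height strips --- is the missing geometric idea.
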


\begin{proof}
  As previously, we are going to replace a tile of the $(T,N_n)$-zipper by a shape using a function $\varphi$, leading to a set of polyominoes. The general shape of the polyominoes is illustrated in Fig.~\ref{fig:tuile1D}. It consists of a spike sent from the original square to the neighbor at distance $n$.
  
\begin{figure}[!ht]
  \begin{center}
  \includegraphics{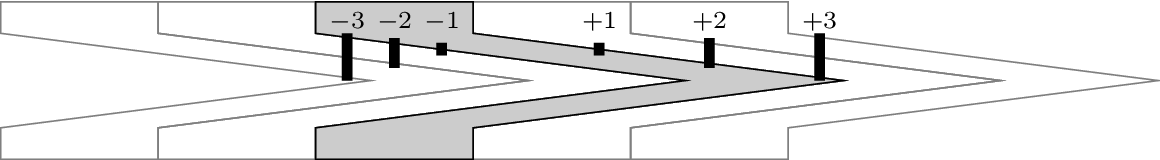}
  \caption{General shape simulating a tile in arbitrary linear neighborhood. In this example, the neighborhood is $N_3 = \{(-3,0)$, $(-2,0)$, $(-1,0)$, $(1,0)$, $(2,0)$, $(3,0)\}$. The shape is drawn in thin black, and the vertical thick lines are the communication places with the neighbor whose abscissa is the number indicated above it. These neighbors are drawn in gray.}
  \label{fig:tuile1D}
  \end{center}
\end{figure}

As shown on the picture, communication bumps can be put on the spike, while dents are located inside the initial square. For matching the glues between one tile and its neighbor $(i,0)$, the bump will cross $i-1$ other spikes, we will see later how many layers of microtiles it represents. Indeed, the essential part of the simulation is the discretization of this shape into a polyomino of microtiles. A final polyomino is represented in Fig.~\ref{fig:1Ddetail}.

\begin{figure}[!ht]
  \begin{center}
    \includegraphics{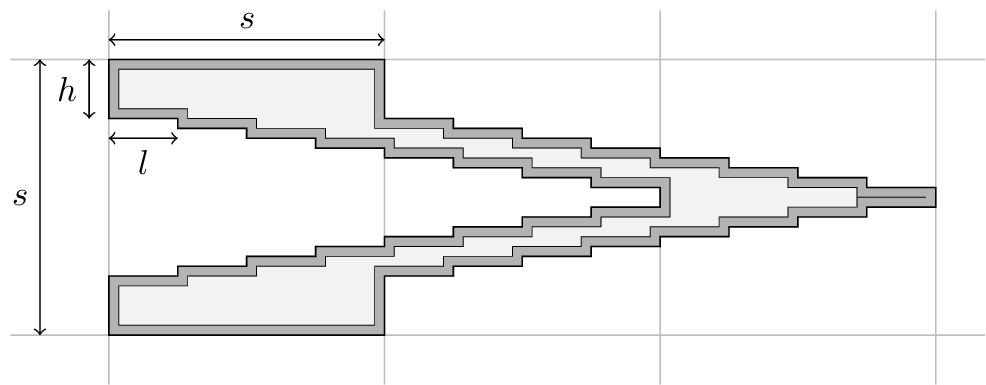}
    \caption{Detail of the polyomino used for arbitrary linear-neighborhood zippers. Here, $n=2$, $l = 7$, $s = 28$ and $h = 6$. The dark gray contour is filled by microtiles, the light gray area is filled for clarity and does not necessarily contain microtiles.}
    \label{fig:1Ddetail}
  \end{center}
\end{figure}

We define the following variables. The height of the polyomino is $s$ microtiles, it is also the width of the initial tile and therefore the scale of the final poly-ribbon. The vertical space between the top of the polyomino and the beginning of the spike is denoted $h$, and since the spike is centered, the space between the end of the spike and the bottom is also $h$. Finally, the spike is a succession of horizontal segments of $l$ microtiles, thus the slope of the spike is $\pm1/l$. Then, the following constraints have to be respected when constructing the polyomino.
\begin{itemize}
  \item The integer $l$ has to be as large as necessary, leaving enough horizontal space for crossings, as suggested by Lemma~\ref{lem:crossing}. Similarly, $h$ needs to provide enough vertical space for the dents. Formally, $l \geq \alpha$ and $h \geq \beta$, where $\alpha$ and $\beta$ will be given later on.
  \item The polyomino must be at least 3 layers wide everywhere, two for the inner and outer layers of the spike, and one for a potential junction of the path from the input microtile to the output microtile (as in Fig.~\ref{fig:simuVN}\subref{fig:simuVN3}). As a consequence, $h \geq 3$ and $s \geq 3l + 3$, since the inner layer of the spike must go down by $3$ before reaching $s-3$. The first constraint can be removed (assuming $\beta \geq 3$), while the second one can be replaced by $s \geq 4l$ (provided $l \geq 3$, which is the case if $\alpha \geq 3$), so that $s$ can be exactly divided in 4 horizontal segments everywhere on the spike.
  \item The initial tile being a square, the height is $s$ and can also be written $h + 2\lfloor ns/l\rfloor + h$ (the spike goes $ns$ microtiles to the right at slope $1/l$). Therefore, after replacing both $s$ by $4l$, we have $h = 2l - 4n$.
\end{itemize}
Since $h \geq \beta$, because of the last equation $l$ has to be greater than $2n+\beta/2$. For given $n, \alpha, \beta \in \N$, a solution of the system is then
\[
\left\{\begin{array}{l}
  l = \lceil \max(\alpha, 2n+\beta/2) \rceil \\
  s = 4l \\
  h = 2l - 4n\enspace.
\end{array}\right.
\]
It is quite obvious that translated copies of this polyomino tile the plane with no overlaps, allowing to replace a grid of tiles by a grid of polyominoes\footnote{A formal proof of this fact could be made using the characterization of polyominoes tiling the plane given in~\cite{BN91}. Indeed, the contour word of our polyominoes (without bumps and dents, and filled to match the definition in~\cite{BN91}) would prove them to be pseudo-squares.}.

\smallskip

We now give some details on how the communications take place. For a more convenient description, we split the polyomino into $n+1$ horizontal parts of width $s = 4l$, we denote them from left to right by part~$0$ (which corresponds to the initial square tile) to part~$n$ (end of the spike). Each of these parts is divided into~$4$ horizontal segments of length~$l$, denoted segment~$1$ to segment~$4$. As suggested in Fig.~\ref{fig:tuile1D}, the bumps and bridges are located on the horizontal segments on the top of the spike in parts~$1$ to~$n$, while the corresponding dents are on the horizontal steps on the top of the hole in part~$0$. For every $0 < i \leq n$, the glue $t_{i,0}$ of the initial tile is allocated some space in every part, on one of the four segments. A simple way to do this is to allocate glue $t_{i,0}$ to segment $(i-1 \mod 4)$, hence each segment is used for $\lfloor n/4\rfloor$ or $\lceil n/4\rceil$ glues. This space is then used in part~$0$ for a dent, in parts~$1$ to~$i-1$ for bridges, and in part~$i$ for the bump.

Moreover, there are $i-1$ spikes to cross by the bump encoding $t_{i,0}$. Each spike is~$4$ layers thick, hence there are at most $4(n-1)$ layers to cross. This number is fixed, so we can apply Lemma~\ref{lem:crossing} with $k = 4(n-1)$: each glue needs a horizontal space of $(8(n-1)+4) \text{ (width of a dent)} + (\card{X}-1)(4(n-1)+1)\text{ (spacing between different possible glues)}$ microtiles. This gives a lower bound to $l$, which has to be greater than $\lceil n/4\rceil \times (\card{X}(4n-3) + 4n - 1)$. After adding $6$ microtiles to separate the dents from the sides of the polyomino, we define
\[
  \alpha = \lceil n/4\rceil \times (\card{X}(4n-3) + 4n - 1) + 6\enspace,
\]
as the lower bound for $l$ used previously. On the other hand, the depth of the dents is $k+1= 4n-3$. Consequently, we have another constraint on $h$ which must be greater than $\beta = 4n$ (space for the biggest dent plus the three original layers). This means $l$ greater than $4n$, which is already the case because $l \geq \alpha \geq 4n$. Then,
\[
\left\{\begin{array}{l}
  l = \lceil n/4\rceil \times (\card{X}(4n-3) + 4n - 1) + 6 \\
  s = 4l \\
  h = 2l - 4n\enspace.
\end{array}\right.
\]

This ensures that our polyomino can be constructed, using an appropriate set of microtiles which will generate the $T_{\mu}$-ribbon we described. The last step of the construction is the positioning of the input and output microtiles. We can choose to place them at top-left position (path coming from or going to the West), top-right (path from or to the East), middle of the top side (path from or to the North), middle of the bottom side (path from or to the South). Since $s$ is even, the middle of a side is chosen after $\lfloor s/2\rfloor$ microtiles. Then the inner layer we preserved can be used for joining the input and the output microtiles easily; for example, starting from the input microtile, one can draw the contour of the path from the left, just before reaching the output microtile, the path goes one layer inside and goes back to the input microtile where it draws the contour from the right (see Fig.~\ref{fig:jonction} for examples).

\begin{figure}[!ht]
\begin{center}
  \includegraphics{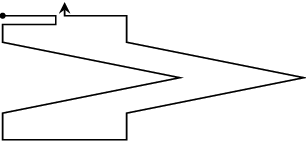}
  \hfill
  \includegraphics{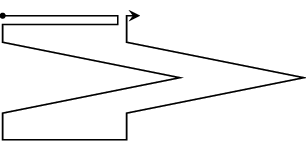}
  \hfill
  \includegraphics{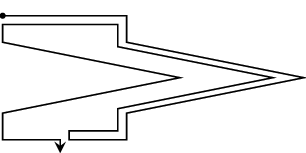}
\end{center}
  \vspace{-2ex}
  \caption{The three different paths when the input direction is West.}
  \label{fig:jonction}
\end{figure}

Finally, putting the appropriate polyominoes one after the other generates a poly-ribbon which is in fact a $T_{\mu}$-ribbon. As it was shown before, any $T_{\mu}$-ribbon can be transformed into a poly-ribbon with $\chi_{\varphi}$, from which one can restore the original $(T,N_n)$-zipper using $\varphi^{-1}$. The poly-ribbon does not overlap if and only if the glues from the $(T,N_n)$-zipper match everywhere, hence $\psi$ is a simulation.
\qed\end{proof}

\subsection{Complexity of the Linear-Neighborhood Construction}\label{sec:nbtuiles1d}

We now give results on the ``size'' of this simulation, which underline the fact that the generated polyominoes can be very complex. All the results refer to the construction described in the proof of Theorem~\ref{th:1D}.

\begin{lemma}\label{lem:bridges}
  A polyomino used to simulate a tile of a $(T,N_n)$-zipper contains $B = \mathcal{O}(n^2)$ bridges.
\end{lemma}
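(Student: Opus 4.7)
My plan is to read off the count of bridges directly from the construction in the proof of Theorem~\ref{th:1D} and sum the contributions glue by glue. The construction assigns, for each glue $t_{i,0}$ with $0 < i \leq n$, a dedicated strip of vertical space on one of the four horizontal segments within every part of the polyomino. In this strip, part~$0$ holds a dent, part~$i$ holds the bump encoding $t_{i,0}$, and each of the intermediate parts $1, 2, \ldots, i-1$ holds exactly one bridge, so that the bump in part~$i$ can reach the dent in part~$0$ across the $i-1$ spikes of the neighboring polyominoes that sit between them. Thus, the number of bridges contributed by a single glue $t_{i,0}$ is precisely $i-1$.

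The next step is simply to sum this contribution over all $i$ in the range $0 < i \leq n$. The positive half of the neighborhood produces
\[
\sum_{i=1}^{n}(i-1) \;=\; \frac{n(n-1)}{2}\enspace.
\]
If one also accounts for the bumps/bridges associated to the negative neighbors $(-i,0)$ (handled symmetrically, either by a mirrored spike on the same polyomino or by the matching dents placed in part~$0$), the total is at most twice this quantity, which is still $\frac{n(n-1)}{2}\cdot 2 = n(n-1)$. Either way, $B = \mathcal{O}(n^2)$.

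There is really no obstacle here beyond bookkeeping: the structural fact that each glue $t_{i,0}$ requires exactly $i-1$ intermediate bridges is already exposed by the construction, so the lemma reduces to a textbook arithmetic-series identity. The only subtlety worth a brief remark is that bridges are counted once per (glue, crossed part) pair, and that their horizontal placement is exactly the one accommodated by the choice $l = \lceil n/4\rceil \times (\card{X}(4n-3)+4n-1)+6$ in the proof of Theorem~\ref{th:1D}; since we only need an upper bound on the number of bridges, no finer analysis of this placement is needed.
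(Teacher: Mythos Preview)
Your count is correct and lands on the same value $B = \frac{n(n-1)}{2}$ as the paper; the paper organizes the sum by pairs of neighbors $(a,0)$ and $(b,0)$ with $a<0<b$ whose communication passes through the central spike (giving $n-1$ pairs at distance $n$, then $n-2$ at distance $n-1$, and so on), whereas you organize it by glue index $i$, but these are two groupings of the same arithmetic series.

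Two small points of confusion in your write-up, neither of which affects the $\mathcal{O}(n^2)$ bound. First, the bridges in parts $1,\ldots,i-1$ of the central polyomino are not there so that \emph{its own} bump in part~$i$ can reach a dent; they are there so that \emph{other} polyominoes' bumps (for the same glue-distance~$i$) can pass through the central spike when the central polyomino happens to sit strictly between the two communicating tiles. Second, there is no ``mirrored spike'' for the negative neighbors in the linear construction: the spike goes only to the right, and communication with the neighbor at $(-i,0)$ is handled by that neighbor's rightward bump meeting the dent in part~$0$ of the central polyomino. Hence your doubling step is unnecessary --- the $\frac{n(n-1)}{2}$ already accounts for every bridge on the polyomino.
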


\begin{proof}
  When $n=2$, there is one bridge between the neighbors at $(-1,0)$ and at $(1,0)$; when $n=3$, there are in addition two bridges between neighbors $(-2,0)$ and $(1,0)$, and $(-1,0)$ and $(2,0)$. In general, there are $n-1$ bridges between $(i-n,0)$ and $(i,0)$ for $1 \leq i \leq n-1$, plus $n-2$ bridges between $(i-n-1,0)$ and $(i+n-1,0)$ for $1 \leq i \leq n-2$, and so on. Hence there are $B=\sum_{i=1}^{n-1} i=\frac{1}{2}n(n-1)$ bridges.
\qed\end{proof}

\begin{lemma}\label{lem:nbmicro1D}
  A polyomino used to simulate a tile of a $(T,N_n)$-zipper is constituted by $\mathcal{O}(\card{X}n^3)$ microtiles.
\end{lemma}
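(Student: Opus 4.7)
The plan is to observe that the polyomino constructed in the proof of Theorem~\ref{th:1D} is a $T_{\mu}$-ribbon, i.e., a non-self-crossing path $P$ filled one microtile per cell, so the microtile count equals the length $\card{\dom(P)}$ of that path. I would then bound this length by decomposing the polyomino into its structural pieces.

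First, I would record the scale parameters from that construction: $l = \lceil n/4 \rceil \times (\card{X}(4n-3) + 4n - 1) + 6 = \mathcal{O}(\card{X} n^2)$, and $s = 4l = \mathcal{O}(\card{X} n^2)$. Consequently, the polyomino's shape fits in a bounding box of width $(n+1)s$ and height $s$.

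Next, I would split $P$ into three parts: (a) the outer contour tracing the staircase-shaped boundary, including all bumps and dents; (b) the inner ``junction'' layer that parallels the contour to reconnect the input and output microtiles (as described around Fig.~\ref{fig:jonction}); and (c) the $B$ bridges passing through the spike. Estimating each: the bounding-box perimeter is $\mathcal{O}(ns)$, the zigzag of slope $\pm 1/l$ over horizontal span $ns$ preserves this order, and the $\mathcal{O}(n)$ bumps and dents each contribute $\mathcal{O}(n)$ to the perimeter by Lemma~\ref{lem:crossing}, for a dominated $\mathcal{O}(n^2)$ addition; so the contour has length $\mathcal{O}(ns)$. The junction layer runs one cell inside the contour, hence also $\mathcal{O}(ns)$. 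By Lemma~\ref{lem:bridges}, $B = \mathcal{O}(n^2)$, and by Lemma~\ref{lem:crossing} each bridge has length $\mathcal{O}(n)$---just the width needed to clear a single crossing---so the total bridge contribution is $\mathcal{O}(n^3)$.

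Summing and substituting $s = \mathcal{O}(\card{X} n^2)$ yields $\card{\dom(P)} = \mathcal{O}(ns) + \mathcal{O}(ns) + \mathcal{O}(n^3) = \mathcal{O}(ns) = \mathcal{O}(\card{X} n^3)$, as desired. The delicate point I would watch for is the bridge estimate: a naive reading of Fig.~\ref{fig:1Ddetail} might suggest each bridge spans the full $s$-wide part it sits in, which would inflate the total to $\mathcal{O}(\card{X} n^4)$. The key is that by Lemma~\ref{lem:crossing} a bridge need only accommodate one bump locally, so its length depends on $n$ alone and not on $\card{X}$.
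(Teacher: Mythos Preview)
Your proposal is correct and follows essentially the same approach as the paper: both count the microtiles by decomposing the ribbon into the contour of the spike (order $ns$), the inner junction layer (order $ns$), the $\mathcal{O}(n)$ bumps/dents of size $\mathcal{O}(n)$, and the $B=\mathcal{O}(n^2)$ bridges each of size $\mathcal{O}(n)$, then substitute $s=\mathcal{O}(\card{X}n^2)$. The paper itemizes the contour a bit more finely (horizontal segments, vertical portions, spike steps) where you group it under a single perimeter estimate, but the accounting is the same, including the key observation that each bridge costs only $\mathcal{O}(n)$ and not $\mathcal{O}(s)$.
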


\begin{proof}
  Drawing the contour of a shape requires:
  \begin{itemize}
    \item $s$ microtiles for the~$2$ horizontal segments in part~$1$;
    \item $4h$ microtiles for all~$4$ vertical portions;
    \item $4(ns+4n)$ microtiles for the 2 spikes ($ns$ for the horizontal segments, $4n$ for the steps down and up);
    \item at most $s + 2h + 2(ns+4n) + s/2$ microtiles for joining input and output microtiles (worst case when joining left to bottom);
    \item $2n$ bumps and dents of height at most $\mathcal{O}(n)$ microtiles;
    \item $B=\mathcal{O}(n^2)$ (Lemma~\ref{lem:bridges}) bridges of height at most $\mathcal{O}(n)$ microtiles, each one at most~$3$ layers thick.
  \end{itemize}
  Since $s = 4l$ and we can choose $l = \lceil \alpha \rceil = \mathcal{O}(\card{X}n^2)$, after summing all of the above we obtain the result.
\qed\end{proof}

\begin{lemma}\label{lem:nbpolyominoes1D}
  Every tile of the initial $(T,N_n)$-zipper is simulated by $\mathcal{O}(\card{X}n^2)$ different polyominoes.
\end{lemma}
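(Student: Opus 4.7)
The plan is to bound $\card{\varphi(t)}$ for a fixed tile $t$ by decomposing each polyomino into a tuple of discrete choices and counting the possible values for each coordinate.

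First, I would isolate the \emph{input/output direction} component of a polyomino. As in the proof of Theorem~\ref{th:VN}, any polyomino in $\varphi(t)$ is determined (as far as its overall orientation within the poly-ribbon is concerned) by one of $4$ input directions and one of $3$ output directions, giving $12 = \mathcal{O}(1)$ base shapes. Since the polyomino must route the internal path from the input microtile to the output microtile around the spike as shown in Fig.~\ref{fig:jonction}, each of these combinations corresponds to a unique ``underlying'' shape at the microtile level.

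Second, I would handle the \emph{bridge configurations}. By Lemma~\ref{lem:bridges} each polyomino contains $B = \mathcal{O}(n^2)$ bridges, and by Lemma~\ref{lem:crossing} each bridge must have its step placed at one of $\card{X}$ horizontal positions inside its allocated segment, matching the relayed glue. The key step is to argue that in the construction of Theorem~\ref{th:1D} the bridges of a single polyomino are not enumerated by all combinations independently: each bridge lives in its own fixed segment (the glue at distance $i$ sits in segment $(i-1 \bmod 4)$), so that a single polyomino in $\varphi(t)$ is obtained by choosing one of the $B$ bridges and one of its $\card{X}$ positions. This yields $\mathcal{O}(\card{X} \cdot n^2)$ distinct bridge configurations to be enumerated, rather than the $\card{X}^{\mathcal{O}(n^2)}$ one might get from treating bridges as independent degrees of freedom.

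Combining the two factors gives $12 \times \mathcal{O}(\card{X} n^2) = \mathcal{O}(\card{X} n^2)$ polyominoes per tile, as claimed. The main obstacle I anticipate is precisely the justification of the bridge-counting step: one has to argue carefully, using the segment allocation and the fact that $t$'s own bumps and dents are fixed by $t$ (so they contribute no extra factor), that enumerating bridge variants is additive in $B$ rather than multiplicative. Once this structural observation is in place, the remainder of the proof is a direct product of the input/output and bridge factors.
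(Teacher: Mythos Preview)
Your decomposition is exactly the one the paper uses: it writes the count as $4 \times 3 \times B\card{X}$, invokes Lemma~\ref{lem:bridges} for $B = \mathcal{O}(n^2)$, and concludes. So at the level of the overall strategy you are aligned with the paper.

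The difficulty you flag is real, and your proposed resolution does not work. Saying that ``each bridge lives in its own fixed segment'' establishes that the bridges are spatially disjoint and hence can be chosen \emph{independently}; but independent choices multiply, they do not add. If each of the $B$ bridges admits $\card{X}$ positions and the choices are independent, the number of bridge configurations is $\card{X}^{B}$, not $B\card{X}$. Your phrasing ``a single polyomino in $\varphi(t)$ is obtained by choosing one of the $B$ bridges and one of its $\card{X}$ positions'' cannot be right as stated: a polyomino must specify the position of \emph{every} bridge, not of a single one, so selecting one bridge and one position does not determine a polyomino. The segment allocation $(i-1 \bmod 4)$ only pins down \emph{where} on the spike each bridge sits horizontally; it does not tie the bridges' vertical offsets (the glue they relay) to one another or to the fixed data of $t$.

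The paper's own proof does not address this point either: it simply asserts the factor $B\card{X}$ with the parenthetical ``number of different possible bridges'' and moves on. So you have matched the paper's argument and correctly isolated its weak spot, but the justification you sketch does not close the gap; as written, neither argument explains why the bridge contribution is $B\card{X}$ rather than $\card{X}^{B}$.
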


\begin{proof}
  For each tile $t \in T$, there are $4$ (number of input positions)${}\times 3$ (number of output positions)${}\times B\card{X}$ (number of different possible bridges) different paths, where $B$ is the number of bridges on the path. Since $B = \mathcal{O}(n^2)$ (Lemma~\ref{lem:bridges}), there are $\mathcal{O}(\card{X}n^2)$ different polyominoes for one tile when $n \geq 2$. When $n = 1$, there are no bridges and there are only 12 different polyominoes.
\qed\end{proof}

\begin{proposition}\label{prop:nbtuiles1d}
  In our construction, a $T_{\mu}$-ribbon simulating a $(T,N_n)$-zipper needs $\card{T_{\mu}} = \mathcal{O}(\card{T}\cdot\card{X}^2n^5)$ microtiles and $\card{X_{\mu}} = \mathcal{O}(\card{T}\cdot\card{X}^2n^5)$ glues.
\end{proposition}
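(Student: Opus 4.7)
The plan is to obtain the proposition as a direct product of Lemmas~\ref{lem:nbmicro1D} and~\ref{lem:nbpolyominoes1D}, together with a short argument about glue uniqueness. First I would count the total number of polyominoes produced by $\varphi$: since $\varphi$ generates $\mathcal{O}(\card{X}n^2)$ distinct polyominoes for each of the $\card{T}$ initial tiles (Lemma~\ref{lem:nbpolyominoes1D}), there are $\mathcal{O}(\card{T}\cdot\card{X}n^2)$ polyominoes in total across the image of $\varphi$.

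Next, Lemma~\ref{lem:nbmicro1D} says that each such polyomino consists of $\mathcal{O}(\card{X}n^3)$ microtiles. The crucial observation is that no microtile is shared between two different polyominoes, nor reused inside a single polyomino. The first part is forced by the \emph{specificity} requirement built into the definition of $\varphi$ (Section~\ref{sec:def}): tiles from $\varphi(t)$ and $\varphi(t')$ must be pairwise disjoint whenever $t \ne t'$. The second part is inherited from the construction in the proof of Theorem~\ref{th:VN}, where every microtile of a polyomino is given a distinct $(\text{input},\text{output})$ glue pair so that a $T_{\mu}$-ribbon uniquely determines the polyomino being traced. Consequently $\card{T_{\mu}}$ is just the sum of the per-polyomino microtile counts, and multiplying the two bounds yields $\card{T_{\mu}} = \mathcal{O}(\card{T}\cdot\card{X}n^2)\cdot\mathcal{O}(\card{X}n^3) = \mathcal{O}(\card{T}\cdot\card{X}^2 n^5)$.

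For the glue count $\card{X_{\mu}}$, I would observe that each microtile carries at most four glues; apart from the two constant ``null'' glues used on the sides that do not participate in any communication, every glue from $X_{\mu}$ appears on exactly one input side and one output side in the whole microtile collection (this is precisely what makes the ribbon-to-poly-ribbon recovery via $\chi_{\varphi}$ unambiguous, as in the proof of Theorem~\ref{th:VN}). Hence $\card{X_{\mu}} \leq 4\card{T_{\mu}} + O(1) = \mathcal{O}(\card{T}\cdot\card{X}^2 n^5)$.

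The main point requiring care is confirming that the asymptotic bounds in Lemmas~\ref{lem:nbmicro1D} and~\ref{lem:nbpolyominoes1D} do not hide constants depending on $\card{X}$ or $n$, and that the disjointness of microtiles across polyominoes genuinely holds in the explicit construction of Theorem~\ref{th:1D}. Both claims follow by inspection of that construction, which is designed precisely to give each microtile a unique identifier through its $(\text{input},\text{output})$ glue pair, so the final multiplication is indeed clean and yields the announced $\mathcal{O}(\card{T}\cdot\card{X}^2 n^5)$ bound for both $\card{T_{\mu}}$ and $\card{X_{\mu}}$.
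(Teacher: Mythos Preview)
Your argument is correct and follows essentially the same route as the paper: multiply the per-tile polyomino count from Lemma~\ref{lem:nbpolyominoes1D} by the per-polyomino microtile count from Lemma~\ref{lem:nbmicro1D}, then bound $\card{X_{\mu}}$ by a constant times $\card{T_{\mu}}$ via the uniqueness of the internal ribbon glues. The only minor point you leave implicit is that distinct polyominoes within the \emph{same} set $\varphi(t)$ (differing by input/output direction or bridge position) also use disjoint microtiles; this is not covered by the specificity condition for $t\neq t'$, but it does hold by the explicit construction in Theorem~\ref{th:1D}, and the paper's own proof treats it just as implicitly.
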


\begin{proof}
  A $(T,N_n)$-zipper can use at most $\card{T}$ tiles, according to our construction each of them is simulated by $\mathcal{O}(\card{X}n^2)$ different polyominoes (Lemma~\ref{lem:nbpolyominoes1D}) constituted by $\mathcal{O}(\card{X}n^3)$ unique microtiles (Lemma~\ref{lem:nbmicro1D}). Hence the simulation needs $\card{T_{\mu}} = \mathcal{O}(\card{T}\cdot\card{X}^2n^5)$ different microtiles.
  
  Since a polyomino is a ribbon which has to be uniquely built, microtiles (except for the input and output ones) have 2 glues which can be found only on one other microtile. Therefore each of these microtiles introduce a new glue, and reuse another: there are at least $\card{X_{\mu}} = \mathcal{O}(\card{T_{\mu}}) = \mathcal{O}(\card{T}\cdot\card{X}^2n^5)$ glues. The other two glues are $null_1$ and $null_2$, and the input and output glues microtiles use glues from $X$, which does not change the order of magnitude of $\card{X_{\mu}}$.
\qed\end{proof}

%Remark that if $T$ contains all possible tiles, then $\card{T} = \card{X}^{\card{N_n}} = \card{X}^{2n}$. Although in general $T$ will contain a much smaller amount of tiles, we can not claim that the number of microtiles and glues used to simulate a $(T,N_n)$-zipper is polynomial in~$n$.

%%% arbitrary

\subsection{Simulating Arbitrary Neighborhoods}\label{sec:arb}

In this section, we prove the first of the two main results of this paper, namely that zippers in any neighborhood can be simulated by ribbons (Corollary~\ref{cor:main}). First, note that in a similar way to Remark~\ref{rem:1d}, any neighborhood $N$ can be replaced by an equivalent \emph{rectangular} neighborhood $N_{m,n} = \{(i,j) \in \Z^2 \st 0 \leq \abs{i} \leq m,\: 0 \leq \abs{j} \leq n \text{ and } (i,j) \ne (0,0)\}$ containing $N$. 

\begin{theorem}\label{th:rectangular}
  Let $T$ be a tile system in rectangular neighborhood $N_{m,n}$, with set of glues $X$. There exist a simulation $\psi$ and a tile system $T_{\mu}$ in neighborhood $\NVN$, with set of glues $X_{\mu}$, such that any $(T,N_{m,n})$-zipper can be simulated by a $T_{\mu}$-ribbon.
\end{theorem}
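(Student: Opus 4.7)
The plan is to combine the three prior constructions: the von Neumann polyomino of Theorem~\ref{th:VN} supplies the basic shape primitives (bumps and dents for glue matching, input/output microtiles to enforce a ribbon path), the Moore crossing gadget of Theorem~\ref{th:moore} (generalized in Lemma~\ref{lem:crossing}) handles communication paths that must pass over other paths, and the linear-neighborhood polyomino of Theorem~\ref{th:1D} provides the template for reaching neighbors at arbitrary distance via spikes.

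First, I would design, for each tile $t \in T$, a shape consisting of a central square of scale $s$ with spikes reaching out toward all neighbors in $N_{m,n}$. A natural choice is to use eight staircase-shaped spikes (one per octant: N, S, E, W, NE, NW, SE, SW), each long enough to reach the farthest neighbor in its octant, with the bumps and dents for neighbors $(i,j)$ in that octant placed along the corresponding spike at a location and in a style that depends on $(i,j)$ and on the glue $t_{i,j}$ (just as in Fig.~\ref{fig:tuile1D} but applied in two dimensions). The shape is then discretized into a polyomino of microtiles from a new tile system $T_{\mu}$, with a ribbon path that enters at one of the four cardinal sides and exits at one of the remaining three, reusing the jonction idea from Fig.~\ref{fig:jonction}.

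Second, I would compute $s$, $l$, and $h$ so that translated copies of this polyomino tile the plane without overlap and so that every bump, dent, and bridge has room to sit without collision. The critical constraint is that a bump encoding the glue between the center and a neighbor at $(i,j)$ must cross up to $\abs{i}+\abs{j}-1$ layers belonging to other polyominoes' spikes; Lemma~\ref{lem:crossing} then pins down the required bump height and dent width/depth, and dictates the number of bridge layers each crossed polyomino must supply. This yields parameters polynomial in $m$, $n$, and $\card{X}$, and gives $\varphi(t)$ as the set of polyominoes enumerated over the 12 input/output direction pairs and over all admissible bridge-glue choices, exactly analogously to the proofs of Theorems~\ref{th:moore} and~\ref{th:1D}.

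The main obstacle is the genuinely two-dimensional nature of the crossings: unlike Theorem~\ref{th:1D}, where all spikes are parallel so that bridges stack linearly, a diagonal communication here must cross spikes belonging both to horizontally-adjacent and to vertically-adjacent polyominoes, sometimes simultaneously at the same microtile region. Ensuring that the required bridges on all intermediate polyominoes can coexist geometrically (and that every distinct neighbor $(i,j)$ is assigned a disjoint communication corridor) is the delicate step and will likely force spikes to be allocated to fixed segments of the polyomino much as in Theorem~\ref{th:1D}, but arranged in a two-dimensional pattern. Once the geometric feasibility and the plane-tiling property are verified, the proof concludes exactly as before: a $T_\mu$-ribbon over these microtiles exists if and only if the bumps and dents of adjacent polyominoes all fit, which by construction is equivalent to all encoded glues of the original $(T,N_{m,n})$-zipper matching; the uniqueness of all non-null microtile glues together with the application of $\chi_{\varphi}$ lets $\varphi^{-1}$ recover the original zipper unambiguously, so the resulting $\psi$ is a simulation in the sense of Definition~\ref{def:tilingsimu}.
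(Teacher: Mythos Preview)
Your proposal correctly identifies the ingredients that must be combined (bump/dent encoding, the crossing gadget of Lemma~\ref{lem:crossing}, spikes as in Theorem~\ref{th:1D}, input/output microtile joining), but the specific geometric design you describe has a genuine gap.

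You propose eight spikes per polyomino, one toward each of the directions N, S, E, W, NE, NW, SE, SW. The difficulty is not merely the crossings you flag at the end; it is the more basic requirement that translated copies of the polyomino tile the plane without overlap. With spikes emanating in all eight directions, the East spike of the polyomino at $(0,0)$ and the West spike of the polyomino at $(1,0)$ occupy the same corridor, and likewise for every opposite pair. You give no mechanism for these opposing spikes to interleave, and it is far from clear that any consistent interleaving exists once diagonal spikes are added on top. Your sentence ``ensuring that the required bridges \dots can coexist geometrically \dots is the delicate step'' is precisely the point where the argument is missing, not merely delicate.

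The paper's construction avoids this problem entirely by sending spikes in only \emph{one} horizontal direction. Because the neighborhood is symmetric ($(i,j)\in N_{m,n}$ iff $(-i,-j)\in N_{m,n}$), each tile need only actively reach the neighbors on one side; the others reach it. The vertical extent of the neighborhood is handled not by separate N/S/diagonal spikes but by a \emph{nested} family of $n+1$ horizontal spikes, each acting as a sheath for the next, so that the $j$-th level of nesting addresses the row at vertical offset $j$. Dents are concentrated in three scalable areas on the left side of the polyomino (one each for the N/W, NW, and SW groups of neighbors, the last being reduced to a NW-type communication by placing the dent below the square). This unidirectional, sheathed design makes the plane-tiling property essentially automatic and reduces the crossing analysis to a single worst case, namely the $(m,-n)$ communication crossing $n+(m-1)(2n+1)$ spikes. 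Your eight-spike layout does not share this structure, and without it the feasibility of the construction is not established.
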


\begin{proof}
  The key of the proof is the generalization of the $\varphi$ simulation from the proof of Theorem~\ref{th:1D} to rectangular neighborhoods. The idea is to have a vertical succession of $n+1$ spikes of length $m$, each of them being a ``sheath'' for the next one (Fig.~\ref{fig:general}).
\begin{figure}[!ht]
  \begin{center}
    \includegraphics{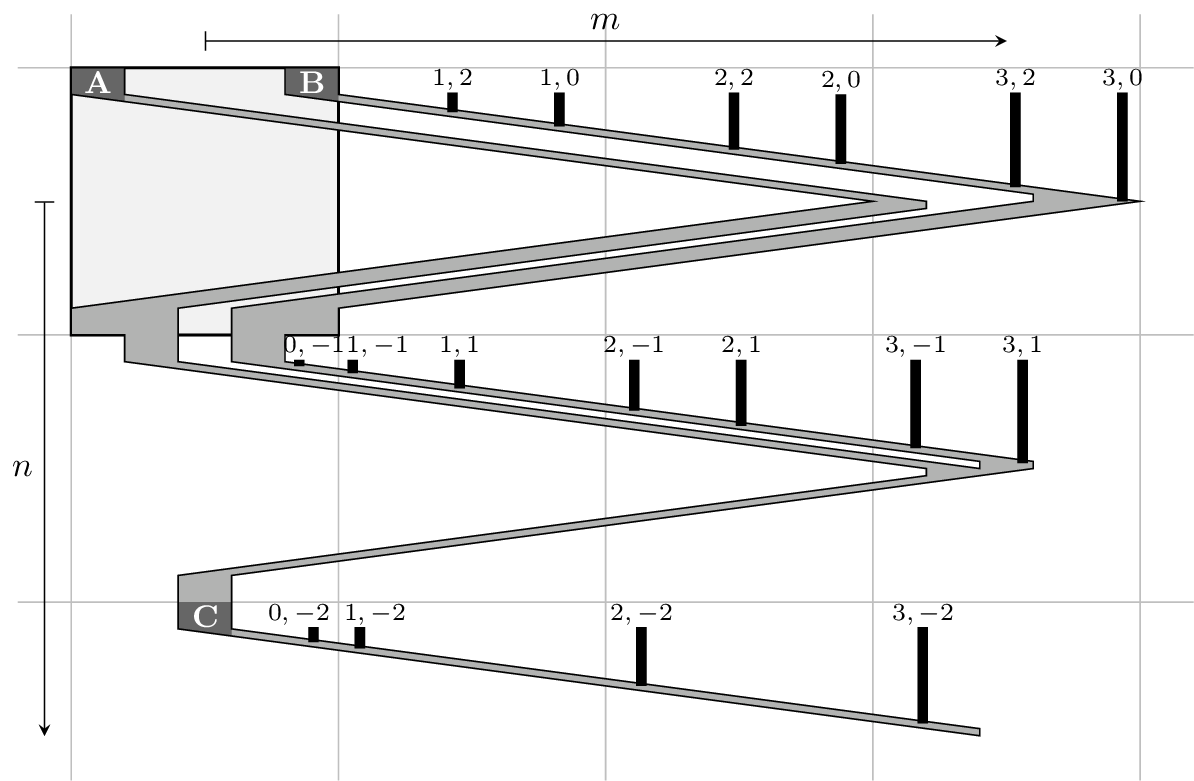}
    \caption{Shape simulating a tile defined in a rectangular neighborhood $N_{m,n}$ of size $(2m+1) \times (2n+1)$ (here $m=3$ and $n=2$). The initial tile is the light gray square, while the path is filled with darker gray.}
    \label{fig:general}
   \end{center}
\end{figure}

The bumps used for communications are put all along the spikes, at the places indicated on the picture. The difficulty is to find places for the dents, so that they can be contained in a place where enough space can be reserved. This is achieved as follows.
  \begin{itemize}
    \item North (from $(0,j)$, for $0 < j \leq n$) and West (from $(i,0)$, for $-m \leq i < 0$) communications are received in the area marked~$\mathbf{B}$ on the picture. This works very similarly to the linear-neighborhood case.
    \item Northwest communications (from $(i,j)$, for $-m \leq i < 0$ and $0 < j \leq n$) are received in~$\mathbf{A}$. Again, it is not difficult to see how the information crosses the layers.
    \item Southwest communications (from $(i,j)$, for $-m \leq i < 0$ and $-n \leq j < 0$) are received in~$\mathbf{C}$. This is slightly more complicated to understand, since these dents are not located inside the initial square. Putting the dent at the bottom of the polyomino allows to simulate the Southwest communications by Northwest communications, which is then easily achieved by positioning bumps on the spike.
  \end{itemize}
The key point is that the areas marked~$\mathbf{A}$,~$\mathbf{B}$, and~$\mathbf{C}$ (in dark gray on Fig.~\ref{fig:general}) are scalable, both vertically and horizontally, so they can be made as big as needed for the dents. Indeed, we can denote as previously by $s$ the width and height of the initial tile, hence the scale of the poly-ribbon. Let $x$ be the width of~$\mathbf{A}$,~$\mathbf{B}$,~$\mathbf{C}$ and of all the other horizontal subdivisions of $s$. Since there are two of these blocks for each of the first $n$ vertical spikes, plus one for the last spike, it holds that $s = (2n+1)x$. As in Theorem~\ref{th:1D}, we need the spike to be~$3$ layers wide, hence the slope $1/l$ of the spikes is defined by $l$ such that $x \geq 3l +3$. Again it is possible to decide that $x = 4l$, \ie a block is made of four descending horizontal segments. Finally, let $h$ be the left height of the $\mathbf{A}$-$\mathbf{B}$-$\mathbf{C}$ areas. The fact that the initial tile is a square is expressed by $s = 2h + 2\lfloor ms/l\rfloor$. We have the following equations:
\[
\left\{\begin{array}{l}
  x = 4l\\
  s = (2n+1)x\\
  s = 2h + 2\lfloor ms/l\rfloor\enspace.
\end{array}\right.
\]
Besides, to ensure enough space for the bumps and dents, we want to make sure that $x$ and $h$ are as big as necessary, \ie $x \geq \alpha$ and $h \geq \beta$ (the exact values of $\alpha$ and $\beta$ will be given later). Once solved, the system gives $h = (2n+1)(2l-4m)$. Since $h$ should be greater than $\beta$ and $x$ greater than $\alpha$, $l$ needs to be greater than $\max(\alpha/4, \beta/(4n+2) + 2m)$. Thus, a solution to the system which guarantees that the polyomino can be constructed is
\[
\left\{\begin{array}{l}
  l = \lceil \max(\alpha/4, \beta/(4n+2) + 2m) \rceil\\
  x = 4l\\
  h = (2n+1)(2l-4m)\\
  s = 4(2n+1)l\enspace.
\end{array}\right.
\]

\smallskip

A last technical difficulty is the description of how the spikes shrink to fit into the previous one. The general way to do this is illustrated on Fig.~\ref{fig:shrink}, which zooms on the rightmost part of the first spike of a polyomino.

\begin{figure}[!ht]
\begin{center}
  \includegraphics{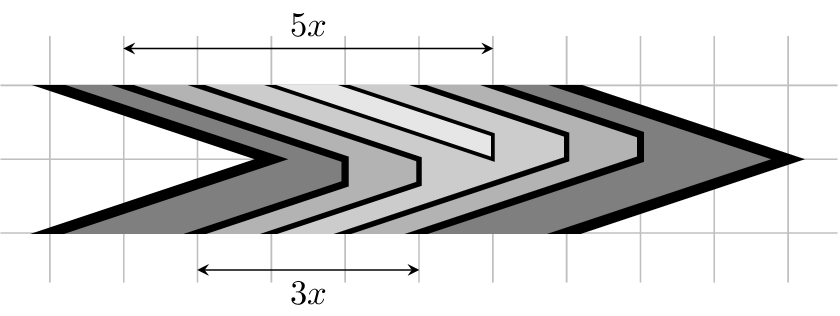}
\end{center}
\vspace{-2ex}
\caption{Illustration of how spikes shrink from width $ax$ to $(a-2)x$. The darkest gray corresponds to the spike of the initial tile, lighter grays represent the spike of farther North neighbors.}
\label{fig:shrink}
\end{figure}

The outer spike does not shrink because it does not need to, since the shrinking will happen at the bottom of the initial tile (see Fig.~\ref{fig:general}). This is not necessary, but it allows a tiling of the plane without any holes between polyominoes. The other spikes shrink by $x$ microtiles on the left and on the right by going down $4$ tiles (remember that the slope of the spikes is $1/l = 4/x$). Remark the slight asymmetry of the shrinking, which has to take place after reaching height $s/2$ on the left, and just before on the right. With all these conditions respected, it should be clear that vertically and horizontally translated copies of this polyomino tile the plane with no overlaps.

\smallskip

It remains to determine the bounds $\alpha$ and $\beta$. An application of Lemma~\ref{lem:crossing} to all the spikes and dents gives us the maximal size of bumps and dents. It is obtained for the diagonal communication between a tile and its neighbor located at $(m,-n)$, which crosses $n + (m-1)(2n+1)$ spikes of thickness $4$ layers, hence a total number of $k=4(2mn+m-n-1)$ layers. It follows that $\beta = k+3$ to ensure a free layer between the top of a dent and the upper side of the polyomino. The bound $\alpha$ is more complicated to express. As for Theorem~\ref{th:1D}, a communication bump-dent needs $2k+4$ horizontal space, plus an extra $(\card{X}-1)(k+1)$ microtiles for the different glues. The size $x$ has to allow $m+n$ dents in~$\mathbf{B}$, $mn$ dents in~$\mathbf{A}$ and~$\mathbf{C}$. Hence, after adding $6$ microtiles for preserving the borders of the block, $\alpha = \max(mn,m+n)\times (2k+4+(\card{X}-1)(k+1)) + 6$, with $k=4(2mn+m-n-1)$.

The rest of the proof (positioning and joining the input and output microtiles, bijection between a $(T,N_{m,n})$-zipper and a set of unique poly-ribbons) is unchanged from the proof of Theorem~\ref{th:1D}.
\qed\end{proof}

\begin{corollary}\label{cor:main}
  Let $T$ be a tile system in arbitrary neighborhood $N$, with set of glues $X$. There exist a simulation $\psi$ and a tile system $T_{\mu}$ in neighborhood $\NVN$, with set of glues $X_{\mu}$, such that any $(T,N)$-zipper can be simulated by a $T_{\mu}$-ribbon.
\end{corollary}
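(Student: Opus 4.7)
The plan is to reduce the arbitrary-neighborhood case to the rectangular-neighborhood case already handled by Theorem~\ref{th:rectangular}, via a padding argument completely analogous to Remark~\ref{rem:1d}. First, given an arbitrary neighborhood $N$, I would set $m = \max\set{\abs{i} \st (i,j) \in N}$ and $n = \max\set{\abs{j} \st (i,j) \in N}$, so that $N \subseteq N_{m,n}$. Then I would lift the tile system $T$ to a tile system $T'$ in neighborhood $N_{m,n}$ exactly as in Remark~\ref{rem:1d}: fix an arbitrary ``dummy'' glue $g \in X$, and define $\zeta_g : T \to T'$ by $\zeta_g(t)_{i,j} = t_{i,j}$ when $(i,j) \in N$ and $\zeta_g(t)_{i,j} = g$ otherwise, and let $T' = \zeta_g(T)$.

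The second step is to verify that this padding preserves zipper validity, namely that for every path $P$ and labeling $r : \range(P) \to T$, the pair $(P,r)$ is a $(T,N)$-zipper if and only if $(P, \zeta_g \circ r)$ is a $(T',N_{m,n})$-zipper. The forward direction holds because the padded positions only ever produce the match $g = g$, and the backward direction holds because the glue constraints at positions of $N$ are copied verbatim by $\zeta_g$. Since $\zeta_g$ is injective (distinct tiles differ on some position in $N$, which is left untouched), this yields a bijection $\Zips_{T,N} \to \Zips_{T',N_{m,n}}$, with inverse given by restricting each tile of $T'$ back to its $N$-coordinates.

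The third step is to apply Theorem~\ref{th:rectangular} to $T'$, obtaining a zipper-simulation $\psi'$ of $(T',N_{m,n})$-zippers by $T_{\mu}$-ribbons together with a polyomino map $\varphi'$ and recovery map $\chi_{\varphi'}$. I then define $\varphi(t) = \varphi'(\zeta_g(t))$, set $\chi_{\varphi} = \chi_{\varphi'}$, and let $\psi((P,r)) = \psi'((P,\zeta_g \circ r))$. The recovery procedure is compositional: given any $T_{\mu}$-ribbon $(P',r')$, Theorem~\ref{th:rectangular} produces a unique valid $(T',N_{m,n})$-zipper $(P,\tilde r)$ such that $\chi_{\varphi}((P',r')) \in \psi'((P,\tilde r))$, and applying $\zeta_g^{-1}$ on the image of $\zeta_g$ gives the unique valid $(T,N)$-zipper $(P,r)$ with $\chi_{\varphi}((P',r')) \in \psi((P,r))$. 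The only point that needs care is checking that this composition genuinely satisfies the zipper-simulation definition from Section~\ref{sec:def} — in particular, that the uniqueness clause transfers across $\zeta_g$ — but this is immediate from the injectivity of $\zeta_g$ and from the fact that any $\tilde r$ in the image of $\zeta_g$ has the dummy glue $g$ at every padded position, so it lies in the range of $\zeta_g \circ r$ for exactly one $r$. No new geometric construction is required beyond Theorem~\ref{th:rectangular}, so I expect the argument to be essentially mechanical and short.
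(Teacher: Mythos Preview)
Your proposal is correct and follows exactly the approach the paper intends: at the start of Section~\ref{sec:arb} the paper already notes that ``in a similar way to Remark~\ref{rem:1d}, any neighborhood $N$ can be replaced by an equivalent rectangular neighborhood $N_{m,n}$ containing $N$,'' and then states the corollary without further proof after Theorem~\ref{th:rectangular}. Your write-up simply makes explicit the padding bijection and the composition of simulations that the paper leaves implicit.
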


Remark~\ref{rem:bij} can be extended to the general case, hence any $T_{\mu}$-ribbon represents a unique $(T,N)$-zipper. Also note that a result similar to Proposition~\ref{prop:nbtuiles1d} could be stated, but it would be more complex and of little interest since the number of tiles would be a lot bigger.

To illustrate the achievability (and the complexity) of this construction, Fig.~\ref{fig:exemple1D} gives a complete example of a polyomino for the simulation of a linear-neighborhood $(T,N_2)$-zipper, with $\card{X} = 2$.

\begin{figure}[!p]
  \begin{center}
    \includegraphics{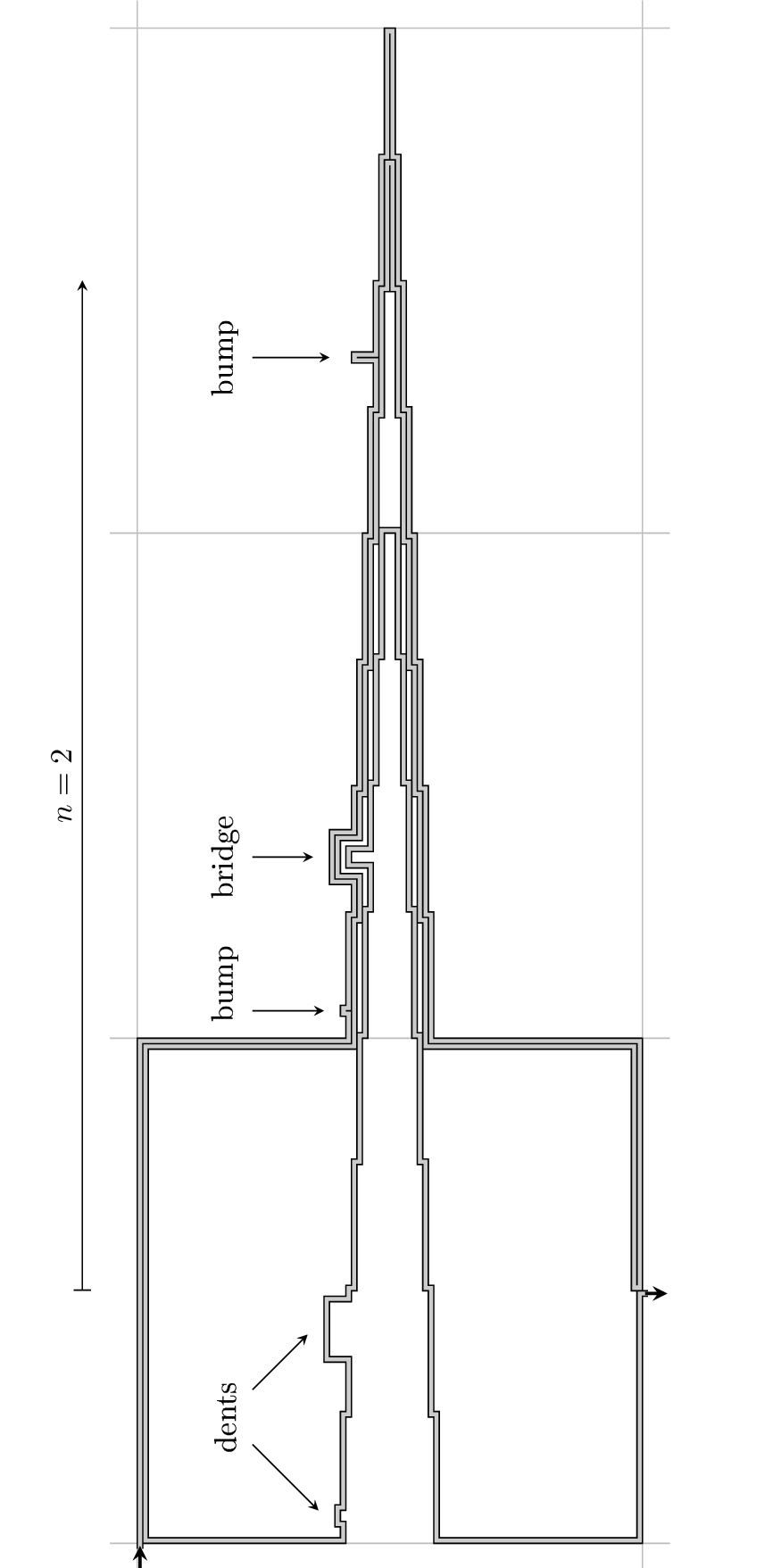}
  \end{center}
  \vspace{-2ex}
  \caption{Rotated polyomino simulating a tile of a $(T,N_2)$-zipper, with $\card{X}=2$, $l=23$, $L=92$, $h=38$, input direction is West and output direction is South. The microtiles are located in the gray layer, the input and output directions are indicated by the black arrows.}
  \label{fig:exemple1D}
\end{figure}

The general case is by far more complicated and a detailed picture would be difficult to understand. We put in Fig.~\ref{fig:exemple} the shape from Fig.~\ref{fig:general}, surrounded by 8 other shapes (colored alternatively in light and medium gray). The communication places between the central shape simulating the tile located at $(0,0)$ and the 8 other shapes are shown, one of them is detailed at the bottom of the picture.

\begin{figure}[!p]
  \begin{center}
    \includegraphics{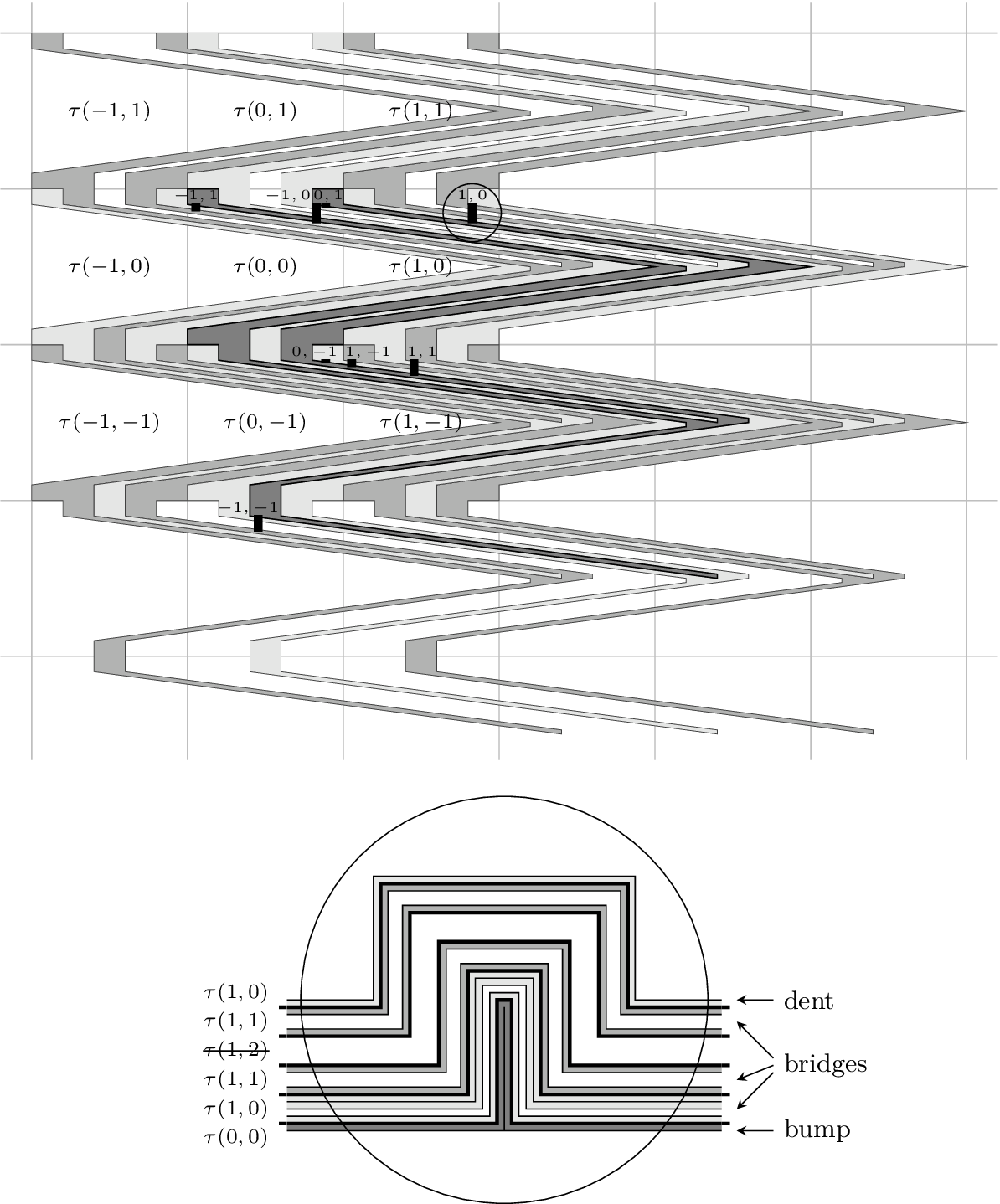}
  \end{center}
  \caption{In dark gray is represented the shape from Fig.~\ref{fig:general}, simulating tile $\T(0,0)$. It is surrounded by 8 other shapes corresponding to 8 neighboring tiles. Below is a detailed view of the circled communication place, with a bump crossing 4 bridges of width 4. Since the tile at position $(1,2)$ is absent, one bridge is missing. The other bridges contain either 2 or 3 layers of microtiles, depending on how the input and output microtiles are joined.}
  \label{fig:exemple}
\end{figure}

%%%%%%%%%%%%%%%%%%%%%%%%%%%% TILINGS %%%%%%%%%%%%%%%%%%%%%%%%%%%%

\section{Extension to Arbitrary-Neighborhood Tilings}\label{sec:tilings}

We now explain how the above construction can be modified, in order to simulate arbitrary-neighborhood tilings by von Neumann-neighborhood poly-tilings, to obtain our second main result (Corollary~\ref{cor:mainT}). Before that, we describe a simpler construction used in~\cite{AKKR02}, which can be adapted to do such a simulation, but only in the case of total tilings. We finally study some properties of the tilings which are preserved by our simulation technique.

\subsection{Polyomino Construction}

The construction described in Sect.~\ref{sec:ziprib} transforms a tile into a polyomino of microtiles, keeping the path unchanged at the macro-level. Hence, a similar construction can be used to prove another result, which states that arbitrary-neighborhood tilings can be simulated by von Neumann-neighborhood tilings. As usual, we first prove the result for rectangular neighborhoods, and deduce the general case from that.

\begin{theorem}\label{th:tilings}
  Let $T$ be a tile system in rectangular neighborhood $N_{m,n}$, with set of glues $X$. There exist a simulation $\psi$ and a tile system $T_{\mu}$ in neighborhood $\NVN$, with set of glues $X_{\mu}$, such that any $(T,N_{m,n})$-tiling can be simulated by a $(T_{\mu},\NVN)$-tiling.
\end{theorem}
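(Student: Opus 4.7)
The plan is to adapt the construction from the proof of Theorem~\ref{th:rectangular} to produce a \emph{tiling} of the polyomino shape rather than a ribbon outlining it. The same shape used there (Fig.~\ref{fig:general}), with its nested spikes and the three communication regions $\mathbf{A}$, $\mathbf{B}$, $\mathbf{C}$, will be reused without geometric modification, since it already tiles the plane by translation with exactly the bump/dent alignments needed. What changes is how the shape is realized with microtiles: instead of a ribbon-path tracing the contour from an input microtile to an output one, every integer position inside the polyomino will be occupied by a microtile from $T_{\mu}$, and each microtile will match glues with \emph{all} of its von Neumann neighbors lying inside the polyomino.

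To force $T_{\mu}$ to assemble only in the intended way, I would assign to each pair (original tile, position-within-polyomino) a distinct microtile whose four glues are private to that pair, except on the outward-facing sides of the polyomino. Hence the only way microtiles of $T_{\mu}$ can validly fit together over a region is by forming complete polyominoes from the image of $\varphi$: interior glues propagate uniquely from any single microtile, and any microtile lacking its private matching neighbor inside a complete polyomino cannot appear in a $(T_{\mu},\NVN)$-tiling. The directionality machinery of Sect.~\ref{sec:ziprib} (input/output microtiles, the three joining paths of Fig.~\ref{fig:jonction}) becomes unnecessary because there is no longer a path structure to carry; the polyomino is completely symmetric and depends only on the original tile it simulates.

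Inter-polyomino interactions are handled exactly as in Theorem~\ref{th:rectangular}: bumps and dents sized according to Lemma~\ref{lem:crossing} encode the original glues, and two adjacent polyominoes simulating tiles $t$ and $t'$ fit together without overlap if and only if $t$ and $t'$ stick in the $N_{m,n}$-neighborhood. Bridges are preserved for the crossing of long-range communications; as in the Moore case of Theorem~\ref{th:moore}, $\varphi(t)$ contains one copy of the polyomino for each admissible configuration of bridge glues, so that the construction remains correct for \emph{partial} $(T,N_{m,n})$-tilings where some neighboring polyominoes are absent. The outward-facing boundary glues of the polyomino (those lying on the segments where spikes and dents meet) are shared across all polyominoes, so their matching on a common edge encodes the bump/dent compatibility and thus the underlying glue match of the simulated tile system.

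The main obstacle will be the converse direction: showing that every $(T_{\mu},\NVN)$-tiling $\T'$ satisfies $\chi_{\varphi}(\T') \in \psi(\T)$ for a unique valid $(T,N_{m,n})$-tiling $\T$. Because every non-boundary microtile is private to a (tile, position) pair, the interior-glue constraints force any valid $T_{\mu}$-tiling to decompose into disjoint complete polyominoes drawn from $\varphi$, up to the boundary debris removed by step~\ref{it:simu_inv_1} of $\chi_{\varphi}$; step~\ref{it:simu_inv_2} then discards misaligned polyominoes to produce a genuine poly-tiling in the sense of Definition~\ref{def:macrotiling}. Condition~\ref{item:macro3} of that definition follows from the bump/dent analysis inherited from Theorem~\ref{th:rectangular}, so setting $\T(x,y) = \varphi^{-1}(\T'(x,y))$ at each reference position yields a well-defined valid $(T,N_{m,n})$-tiling, and $\psi$ is indeed a tiling-simulation.
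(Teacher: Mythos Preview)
Your proposal is essentially correct and follows the same overall architecture as the paper (reuse the shape of Fig.~\ref{fig:general}, drop the input/output path machinery, make interior glues private to each $(t,\text{position})$ pair, and invoke $\chi_{\varphi}$ for the converse). Two points of departure are worth noting.

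First, the paper does \emph{not} fill the polyomino: it keeps the hollow contour from Theorem~\ref{th:rectangular} and merely removes the third inner layer that was used to join input and output microtiles (since there is no longer a path to route). Filling the interior is only introduced later, in the proof of Theorem~\ref{th:total}, where it is needed to preserve totality. Your filled variant is not wrong, but it slightly anticipates that later step and makes your sentence ``every integer position inside the polyomino will be occupied'' inconsistent with the bridge gaps you correctly retain.

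Second, and more substantively, the paper does not use ``shared'' outward-facing glues. Instead it labels each boundary microtile with its \emph{coordinates} $(x,y)$ inside the polyomino (on North/East sides) and with the coordinates of the expected neighboring microtile (on South/West sides). This forces any two adjacent polyominoes in a valid $(T_{\mu},\NVN)$-tiling to be correctly aligned on the scale-$s$ grid, so misalignment is ruled out \emph{locally} rather than cleaned up afterwards by step~\ref{it:simu_inv_2} of $\chi_{\varphi}$. Your approach still satisfies Definition~\ref{def:tilingsimu} because $\chi_{\varphi}$ handles the misaligned cases, but the paper's coordinate-glue device buys a tighter correspondence between $(T_{\mu},\NVN)$-tilings and poly-tilings at the cost of a larger glue set $X_{\mu}$ --- a trade-off the authors explicitly acknowledge in the remark following the proof.
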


\begin{proof}
  Only a few modifications have to be done to the general polyomino represented in Fig.~\ref{fig:general}, in the construction of Theorem~\ref{th:rectangular}, to transform the $T_{\mu}$-ribbon into a $(T_{\mu},\NVN)$-tiling. First, the inner layer used to join the input and output microtiles can be removed, allowing thinner spikes.
  
  The main difference is that now, all 4 glues of the microtiles have to match with their neighbors, if present. The glues towards the inside of the polyomino (\ie the glues used for adjacent microtiles from the same polyomino) can be chosen as previously, uniquely for each tile of each polyomino. For the glues used to match with other polyominoes, things are slightly more complex, since they will be used to enforce the correct position of the adjacent polyomino, to avoid misalignment. These glues can be chosen as pairs $(x,y) \in \N^2$. For North and East glues, $(x,y)$ represents the position of the microtile inside the polyomino, counting from the bottom-left part of the polyomino for example (hence, the microtile at the top-left of the polyomino will have glue $(1,s(n+1))$ on the North side). For South and West glues, this pair should be the position of the microtile on the neighboring polyomino, to ensure proper match. Therefore, the West glue of the top-left microtile will be $(s,s(n+1))$, since it has to match with the top-right tile of the polyomino adjacent on the right. Note that $1 \leq x \leq s(m+1)$ and $1 \leq y \leq s(n+1)$ (see Fig.~\ref{fig:general}), therefore the number of these glues is bounded, once $T$ and $N_{m,n}$ are fixed. These glue do not need to encode any information from the original set of glues $X$, since this matching will be enforced by the shape of the polyominoes. Their role is only to align properly the polyominoes.
  
  Then, each $(T,N_{m,n})$-tiling is associated with a set of $(T_{\mu},\NVN)$-tilings by this construction. Conversely, any $(T_{\mu},\NVN)$-tiling $\T'$ can be turned into a poly-tiling $\T''$ using the tile system $T_{\mu}$ in neighborhood $\NVN$, by discarding incomplete and misaligned polyominoes using a function $\chi_{\varphi}$. Then, starting from the poly-tiling $\T''$, one can recover the initial tiles at their respective positions, with the help of the function $\varphi^{-1}$ applied to the unique sets containing each of the complete polyominoes. By construction, this $(T,N)$-tiling $\T$ is the only valid tiling such that $\chi_{\varphi}(\T') \in \psi(\T)$, therefore $\psi$ is a simulation.
\qed\end{proof}

Remark that instead of increasing the number of glues to force a correct alignment of the polyominoes, one could have worked on their shape, adding more bumps and dents as in~\cite{CK09}. We made the choice of a shape as simple as possible, even if it increases the number of glues used.

As explained at the beginning of Sect.~\ref{sec:arb}, any neighborhood can be replaced by an equivalent rectangular neighborhood, hence the following corollary.

\begin{corollary}\label{cor:mainT}
  Let $T$ be a tile system in arbitrary neighborhood $N$, with set of glues $X$. There exist a simulation $\psi$ and a tile system $T_{\mu}$ in neighborhood $\NVN$, with set of glues $X_{\mu}$, such that any $(T,N)$-tiling can be simulated by a $(T_{\mu},\NVN)$-tiling.
\end{corollary}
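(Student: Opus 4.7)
The plan is to reduce the arbitrary-neighborhood case to the rectangular-neighborhood case already handled by Theorem~\ref{th:tilings}, exactly in the same spirit in which Corollary~\ref{cor:main} was deduced from Theorem~\ref{th:rectangular}. First, since $N$ is finite, there exist $m,n \in \N$ with $N \subset N_{m,n}$. I would then define a tile system $T'$ in the rectangular neighborhood $N_{m,n}$ that is tiling-equivalent to $T$, by padding each tile with a fresh ``dummy'' glue on the extra positions. Concretely, pick some new symbol $g \notin X$ and set $X' = X \cup \set{g}$; for every $t \in T$ define $\zeta_g(t) \in T'$ by $\zeta_g(t)_{i,j} = t_{i,j}$ if $(i,j) \in N$, and $\zeta_g(t)_{i,j} = g$ otherwise. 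Let $T' = \zeta_g(T)$.

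The point of adding a single fresh glue $g$ is that glues at positions in $N_{m,n} \setminus N$ always match trivially across any pair of neighboring tiles (both sides are $g$), and never match anything else in the system. Hence, for any $D \subset \Z^2$, the mapping $\T : D \to T$ is a $(T,N)$-tiling if and only if the mapping $\T' : D \to T'$ defined by $\T'(x,y) = \zeta_g(\T(x,y))$ is a $(T',N_{m,n})$-tiling. This yields a bijection $\zeta_g^*$ between $\Tils_{T,N}$ and $\Tils_{T',N_{m,n}}$, obtained pointwise from the bijection $\zeta_g : T \to T'$.

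Now I apply Theorem~\ref{th:tilings} to $T'$ in the rectangular neighborhood $N_{m,n}$, obtaining a tile system $T_\mu$ in neighborhood $\NVN$, a set of glues $X_\mu$, an associated polyomino function $\varphi'$, and a tiling-simulation $\psi' : \Tils_{T', N_{m,n}} \to \P(\chi_{\varphi'}(\Tils_{T_\mu, \NVN}))$. I then define the simulation for the original system by composition: set $\varphi = \varphi' \circ \zeta_g$, keep the same $\chi_{\varphi} = \chi_{\varphi'}$ (the underlying polyominoes and their reference tiles are the same), and define
\[
  \psi(\T) = \psi'(\zeta_g^*(\T)) \quad \text{for every } \T \in \Tils_{T,N}.
\]
Since $\zeta_g$ is injective, $\varphi$ still produces unique, tile-specific polyominoes, so Definition~\ref{def:tilingsimu} is met.

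The remaining verification is routine: $\psi(\T)$ is non-empty since $\psi'(\zeta_g^*(\T))$ is, and each element is a poly-tiling of $T_\mu$ whose polyominoes lie in $\varphi(\T(x,y))$ by construction. Conversely, given any $(T_\mu,\NVN)$-tiling $\T_\mu$, Theorem~\ref{th:tilings} furnishes a unique valid $(T',N_{m,n})$-tiling $\T'$ with $\chi_{\varphi'}(\T_\mu) \in \psi'(\T')$; then $\T = (\zeta_g^*)^{-1}(\T')$ is the unique valid $(T,N)$-tiling with $\chi_\varphi(\T_\mu) \in \psi(\T)$, because $\zeta_g^*$ is a bijection carrying valid tilings to valid tilings. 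There is no genuine obstacle in this argument, since the heavy geometric work has already been done in Theorem~\ref{th:tilings}; the only thing to be careful about is that the dummy glue $g$ is chosen disjoint from $X$ so that it never spuriously matches any real glue, which preserves the bijective correspondence between validity of $\T$ and validity of $\zeta_g^*(\T)$.
\qed
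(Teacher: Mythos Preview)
Your proposal is correct and follows essentially the same route as the paper: the corollary is deduced from Theorem~\ref{th:tilings} by first embedding the arbitrary neighborhood $N$ into a rectangular neighborhood $N_{m,n}$ via a dummy-glue padding (the tiling analogue of Remark~\ref{rem:1d}), and then invoking the rectangular case. The only cosmetic difference is that you choose a fresh glue $g \notin X$, whereas the paper (in Remark~\ref{rem:1d}) picks $g \in X$; either works, since at every position of $N_{m,n}\setminus N$ both sides carry $g$ and hence match regardless, so your extra caution about spurious matches is not actually needed.
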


\subsection{Properties Preserved by the Simulation}

The main interest of this construction is that it guarantees that some properties of the initial tiling are preserved, and still verified in the final poly-tiling. For example, unlike what happens with the simple construction explained in Sect.~\ref{sec:naive}, a partial $(T,N)$-tiling is simulated by a partial $(T_{\mu},\NVN)$-tiling. The converse is also true, as well as some other important properties like periodicity and convexity, as explained in this section.

\paragraph{Partial tilings} Clearly, from our construction, a partial tiling is simulated by partial tilings. For the converse, remark that our construction creates ``holes'' in a poly-tiling, because the polyominoes we build are hollow. Then, provided we fill the polyominoes by new tiles with unique glues, we obtain the following immediate result.

\begin{theorem}\label{th:total}
  Let $T$ be a tile system in arbitrary neighborhood $N$, with set of glues $X$. There exists a simulation $\psi$ such that every $(T,N)$-tiling $\T$ is total if and only if all $(T_{\mu},\NVN)$-tilings in $\psi(\T)$ are total.
\end{theorem}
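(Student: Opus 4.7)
The plan is to modify the construction of $\varphi$ from the proof of Theorem~\ref{th:tilings} (and hence Corollary~\ref{cor:mainT}) so that each polyomino $\varphi(t)$ becomes \emph{solid} rather than a hollow contour. In the current construction, only the outline (with bumps, dents and bridges) is filled by microtiles, so even the simulation of a total tiling leaves uncovered positions inside every polyomino. Although translated copies of the outlines tile the plane without overlaps, the interior regions are not covered, which is precisely what prevents the simulation from being total-preserving.

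Concretely, for each $t \in T$ and each polyomino in the original $\varphi(t)$, I would introduce a set of \emph{filler} microtiles in $T_\mu$, one per interior position, with four fresh glues each. These fresh glues are chosen to be unique to the triple (tile $t$, polyomino variant, interior position): each inward-facing glue of a contour microtile is redefined to match exactly one filler microtile, and filler microtiles match each other only along their intended shared edges within the same polyomino. No glue on any filler microtile appears anywhere else in $T_\mu$. Thus the modified $\varphi(t)$ consists of polyominoes whose footprint equals the full (outline $+$ interior) region of the original shape, and the resulting $\psi$ is still a simulation in the sense of Definition~\ref{def:tilingsimu}, because recovery via $\varphi^{-1}$ and $\chi_\varphi$ works as in Theorem~\ref{th:tilings}.

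For the forward direction, suppose $\T : D \to T$ is total, i.e.\ $D = \Z^2$. Since the (solid) polyominoes are translated copies that tile the plane, $\bigcup_{(x,y) \in \Z^2} \bigl(\dom(\varphi(\T(x,y))) + s(x,y)\bigr) = \Z^2$, so every position is covered by exactly one microtile and every $(T_\mu,\NVN)$-tiling in $\psi(\T)$ is total. For the converse, suppose $\T$ is partial, and pick $(x,y) \notin D$. The positions belonging to the footprint of the polyomino that would have simulated $\T(x,y)$ receive no microtile from any neighboring polyomino (because, by the no-overlap property of the tiling-by-copies, those positions are not in the footprint of any other $\varphi(\T(x',y'))$), and they cannot be filled by stray microtiles because filler and contour glues are unique to their polyomino. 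Hence every poly-tiling in $\psi(\T)$ has holes at exactly those positions and is therefore partial.

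The main obstacle is the uniqueness requirement: one has to rule out spuriously ``completing'' a partial tiling by inserting filler microtiles borrowed from a different polyomino or from another instance of the same tile. This is precisely why the fresh interior glues must be pairwise distinct across (tile, variant, interior position) triples, making each filler microtile usable in exactly one location of one polyomino. Once that bookkeeping is set up, the rest of the argument is a direct transfer from the proof of Theorem~\ref{th:tilings}.
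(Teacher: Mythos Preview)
Your proposal is correct and takes essentially the same approach as the paper: the paper's justification (given in the paragraph immediately preceding the theorem) is simply to ``fill the polyominoes by new tiles with unique glues,'' which is exactly your filler-microtile construction. Your write-up is more detailed than the paper's one-line argument, but the underlying idea---making each polyomino solid via position-specific interior microtiles so that totality is preserved in both directions---is identical.
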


Note that when $\T$ is total, $\psi(\T)$ is a singleton. Indeed, the polyominoes of $\varphi(t)$ differ only by the position of the bridges. When the tiling is total, all bridges are constrained by the neighboring polyominoes, hence only one element of the set $\varphi(t)$ is possible. The final poly-tiling is therefore uniquely defined.

\begin{remark}
Similarly, remark that a tiling is finite [resp., connected] if and only if all the poly-tilings which simulate it are finite [resp., connected]. As a consequence, the simulation of polyominoes is achieved by polyominoes only.
\end{remark}

\paragraph{Periodic tilings} Periodicity is an essential notion in the classical tiling theory~\cite{B66,R71}, it should be preserved by a meaningful simulation. The following result shows that this is the case with our construction.

\begin{theorem}
  Let $T$ be a tile system in arbitrary neighborhood $N$, with set of glues $X$. There exists a simulation $\psi$ such that every $(T,N)$-tiling $\T$ is periodic if and only if at least one of the $(T_{\mu},\NVN)$-tilings in $\psi(\T)$ is periodic.
\end{theorem}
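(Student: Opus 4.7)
The plan is to prove both directions of the biconditional for the simulation $\psi$ of Corollary~\ref{cor:mainT} at scale $s$. The key structural fact I will lean on throughout is that, because in the construction of Theorem~\ref{th:tilings} the North/East glues of the microtiles encode the microtile's relative position inside its polyomino while the interior glues are per-tile unique, every microtile is globally unique: its identity determines both the original tile $t$ and the relative position within $\varphi(t)$. In particular, the reference microtile of $\varphi(t)$ can appear in any $(T_{\mu},\NVN)$-tiling lying in some $\psi(\T)$ only at positions of the form $(sx,sy)$.

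For the forward direction, I would assume $\T$ admits periods $(p_h,p_v)$ and construct a specific $\T' \in \psi(\T)$ with periods $(sp_h,sp_v)$. The polyominoes in $\varphi(t)$ differ only in the positions of their bridges, each of which encodes some glue $\T(x_A,y_A)_{a,b}$ of a sender tile located at a distant macro-position $(x_A,y_A)$. I would set every bridge inside the polyomino placed at macro-position $(x,y)$ to encode the value $\T(x_A,y_A)_{a,b}$ whenever $(x_A,y_A) \in D$, otherwise to the value carried by any periodic translate $(x_A+\alpha p_h,y_A+\beta p_v) \in D$ (all such translates carry the same value by periodicity of $\T$), and to a fixed default glue when no translate is present. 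Every sticking constraint is then honored, so $\T' \in \psi(\T)$; and the bridge data at macro-position $(x+p_h,y)$ coincides with that at $(x,y)$ because it is determined by the values of $\T$ at a periodically shifted set of sender positions. Hence $\T'$ is invariant under the shifts $(sp_h,0)$ and $(0,sp_v)$.

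For the backward direction, suppose some $\T' \in \psi(\T)$ has periods $(p'_h,p'_v)$. Pick any $(x_0,y_0) \in D$, so $\T'(sx_0,sy_0)$ is the reference microtile $m$ of $\varphi(\T(x_0,y_0))$. Whenever $(sx_0+p'_h,sy_0)$ lies in $\dom(\T')$, periodicity puts $m$ at that location too, and global uniqueness forces this location to itself be a reference position, so $s \mid p'_h$; symmetrically $s \mid p'_v$. Setting $p_h = p'_h/s$ and $p_v = p'_v/s$, for every $(x,y)$ and $(x+\alpha p_h,y+\beta p_v)$ in $D$ the associated reference positions both lie in $\dom(\T')$ and carry the same reference microtile by periodicity of $\T'$, which forces $\T(x,y) = \T(x+\alpha p_h,y+\beta p_v)$. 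In the degenerate case in which no reference pair in $\dom(\T')$ is related by $(p'_h,p'_v)$, $\T$ must be finite, and hence periodic by the convention that finite tilings are periodic.

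The main obstacle I expect is in the forward direction: verifying that the uniform canonical choice of bridges really yields a geometrically valid poly-tiling, in particular that a bump from a present sender passes cleanly through the bridges of intermediate polyominoes even when some of those polyominoes' own distant endpoints are absent. This is exactly what the bump/dent/bridge dimensioning of Theorem~\ref{th:tilings}, based on Lemma~\ref{lem:crossing}, was designed to guarantee: any bridge configuration consistent with the matching glues of $\T$ is geometrically realizable, and the canonical assignment above is one such configuration.
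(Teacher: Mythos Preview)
Your approach matches the paper's: for the forward direction, choose the bridge data periodically so the resulting poly-tiling inherits periods $(sp_h,sp_v)$; for the backward direction, pull a micro-period down to a macro-period via reference positions. The paper's own argument is extremely terse (essentially ``clearly''), and your write-up supplies the missing detail.

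There is, however, a small gap in your backward direction. You try to show $s\mid p'_h$ by finding a reference position $(sx_0,sy_0)$ with $(sx_0+p'_h,sy_0)\in\dom(\T')$, and you fall back on ``$\T$ must be finite'' when no such witness exists. That fallback is not justified: take $D$ to be a single column and $p'_h$ larger than the horizontal extent of any polyomino. Then no horizontal witness exists, yet $D$ is infinite. The correct conclusion in that situation is not that $\T$ is finite but that the horizontal constraint on $\T$ is vacuous.

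The clean fix avoids the $s\mid p'_h$ detour entirely. Since $(p'_h,p'_v)$ are periods of $\T'$, so are $(sp'_h,sp'_v)$. Now if $(x,y)$ and $(x+\alpha p'_h,y+\beta p'_v)$ both lie in $D$, their reference positions $(sx,sy)$ and $(sx+\alpha sp'_h,sy+\beta sp'_v)$ both lie in $\dom(\T')$ and carry the same microtile, whence $\T(x,y)=\T(x+\alpha p'_h,y+\beta p'_v)$. So $\T$ is periodic with periods $(p'_h,p'_v)$, with no case split needed. This is presumably what the paper's ``clearly'' is gesturing at, and it also recovers the remark after the theorem that the micro-periods are $s$ times the macro-periods.
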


\begin{proof}
  Clearly, if a $(T_{\mu},\NVN)$-tiling as constructed in Theorem~\ref{th:tilings} is periodic, restricting it to a poly-tiling preserves its periodicity, and then the $(T,N)$-tiling it simulates is also periodic.
  
  Conversely, if a $(T,N)$-tiling is periodic, then one of the poly-tilings which simulate it according to Theorem~\ref{th:tilings} has to be periodic. In the case of a total $(T,N)$-tiling, this fact is obvious. In the case of a partial tiling, it suffices to choose the poly-tiling with the correct bridge constraints at the borders, so that the periodic pattern of the tiling repeats all over.
\qed\end{proof}

Note that if the periods of a tiling $\T$ are $p_v, p_h \in \N_+$, the periods of the periodic poly-tiling in $\psi(\T)$ are $s p_v, s p_h \in \N_+$, where $s$ is the scale of the poly-tiling.

\paragraph{Convex tilings} The different convexity notions (line, column, or both) are important when dealing with polyominoes~\cite{BDNP96,CD99}, which are a particular case of tilings. Our construction does not preserve the convexity of a tiling, but it is possible to modify it in order to preserve line convexity in the case of linear neighborhoods.

\begin{proposition}\label{prop:lconvexity}
  Let $T$ be a tile system in arbitrary linear neighborhood $N$, with set of glues $X$. There exists a simulation $\psi$ such that every $(T,N)$-tiling $\T$ is line convex if and only if all $(T_{\mu},\NVN)$-tilings in $\psi(\T)$ are line convex.
\end{proposition}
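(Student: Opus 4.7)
The plan is to construct $\psi$ by adapting the polyomino construction of Theorem~\ref{th:tilings} so that (i) each polyomino, viewed as a set of microtiles, is line convex, and (ii) in every macro row there is a non-empty band of micro rows in which each polyomino occupies exactly the micro-column interval corresponding to its own macro column. Concretely I would first fill the interior of each polyomino with new internal microtiles carrying polyomino-specific glues, and then rearrange the bumps and dents of Theorem~\ref{th:1D} so that no micro row of the filled polyomino is split into two disjoint intervals. Because a linear neighborhood requires only horizontal communication, every glue-encoding bump or dent can be relocated onto a vertical edge of the polyomino, or onto a single consolidated horizontal location, without losing the glue-matching functionality or the crossing gadget of Lemma~\ref{lem:crossing}. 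Increasing the scale $s$ if necessary ensures that, in every macro row, a positive number of micro rows lie outside the spike band; in such a body-only row every polyomino at macro column $x$ contributes exactly the interval $[sx, s(x+1)-1]$ of micro columns.

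For the forward direction, assume $\T : D \to T$ is line convex and fix a micro row $Y$. Let $y = \lfloor Y/s \rfloor$. The present tiles of $\T$ in macro row $y$ form a contiguous interval $[x_1, x_2]$. Each filled polyomino for a tile in this interval contributes a contiguous interval of micro columns to row $Y$; consecutive polyominoes have intervals whose bodies abut, and any spike in row $Y$ only extends the covered region rightward without creating a gap. Row $Y$ is therefore a single contiguous interval, so the poly-tiling is line convex.

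For the reverse direction, suppose $\T$ is not line convex: some $(g, y) \notin D$ lies strictly between two present tiles $(a, y), (b, y) \in D$ with $a < g < b$. Choose a micro row $Y = sy + y_0$ with $y_0$ in the body-only band described above. In this row every present polyomino at macro column $x$ contributes exactly the interval $[sx, s(x+1)-1]$, so row $Y$ contains microtiles in $[sa, s(a+1)-1]$ and in $[sb, s(b+1)-1]$ but none in $[sg, s(g+1)-1]$. Row $Y$ is thus not a single interval and the poly-tiling fails to be line convex, which by contrapositive gives the implication line convex poly-tiling $\Rightarrow$ line convex $\T$.

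The main obstacle is step (ii) of the construction: the reorganisation of bumps, dents and bridges so that each filled polyomino is line convex. The original layout of Theorem~\ref{th:1D} places bumps on the horizontal segments of the spike, which after filling would create disjoint microtile intervals in certain rows above the spike and break line convexity of a single polyomino. Moving the glue encoding onto vertical edges (or equivalently encoding each neighbor's glue value by the vertical position of a single bump or dent per neighbor) requires some care to remain compatible with the bridge-crossing construction of Lemma~\ref{lem:crossing}, but is feasible in the linear-neighborhood setting precisely because no vertical communication is needed; once carried out, the row-by-row verification above is routine.
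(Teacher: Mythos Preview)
Your overall structure is sound: you correctly identify that the obstacle to line convexity is the bump/dent/bridge geometry on the horizontal segments of the spike, you fill the polyominoes, and your forward and reverse directional arguments (including the ``body-only band'' contrapositive) are essentially the right ones once the construction is in place.

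The gap is precisely where you flag it yourself. You propose to ``relocate the glue encoding onto a vertical edge of the polyomino, or onto a single consolidated horizontal location'', and then assert this ``is feasible in the linear-neighborhood setting''. But you do not show how a bump placed on a vertical edge can still implement the crossing gadget of Lemma~\ref{lem:crossing}, which is what allows the tile at column $x$ to communicate with the tile at column $x+i$ across the $i-1$ intervening spikes. That gadget is inherently a \emph{vertical} bump crossing \emph{horizontal} bridge layers; moving the bump to a vertical edge changes the geometry entirely, and nothing in your sketch explains how distant-neighbor matching survives. Without this, step~(ii) is an intention rather than a construction, and the proof does not go through.

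The paper closes this gap differently: instead of relocating the bumps, it replaces each bump/dent pair by a pair of \emph{stairs} integrated into the spike itself (Fig.~\ref{fig:dent}). The spike locally descends and re-ascends by $2(k+1)$ microtiles at each communication site, with the horizontal position of the two staircases encoding the glue; a glue mismatch forces an overlap on one of the two staircases. Because the stairs are part of the spike contour rather than detached protrusions, the filled polyomino is a single interval in every row, hence line convex. The price is that each horizontal segment of length $l$ now drops by $2\gamma(\delta+1)$ instead of $1$ (where $\gamma$ is the number of communication sites per segment and $\delta$ the maximal number of crossed layers), which forces a larger $l$, but the discretization system remains solvable. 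Once this is done, the row-by-row argument you give is exactly what is needed.
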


\begin{sproof}
  First, polyominoes should be filled as in the proof of Theorem~\ref{th:total}. Then, the construction needs to be modified again, for one reason visible on Fig.~\ref{fig:crossing2}: the bumps, dents and bridges create horizontal holes.
  
%  First of all, the sheaths can be replaced by stacking layers, as shown on Fig.~\ref{fig:sheath}. This requires some technical details to make sure that the polyomino remains connected. Besides, the $\mathbf{B}$ block (Fig.~\ref{fig:general}) would disappear, so the horizontal communications should arrive in $\mathbf{A}$ and the vertical ones in $\mathbf{C}$ (from above).
%    \begin{figure}[!ht]
%    \begin{center}
%    \includegraphics{figs/sheath}
%    \caption{Replacing sheaths by layers. The different polyominoes are represented by different gray levels, they stack instead of interlocking.}
%    \label{fig:sheath}
%    \end{center}
%  \end{figure}

  The idea is to replace these ``holes'' by two sets of stairs, using again the position of the stairs to ensure the glue matching, as represented in Fig.~\ref{fig:dent}. In this figure, the integer $g$ is the position of the dent, as induced by some glue. In order for the top and bottom polyominoes to match, the bump has to be located at position $g+k+1$, where $k$ is the number of layers crossed. In the new construction, the glue matching is enforced by the fact that in the bottom polyomino, if $g$ is too big then there is an overlap on the left stairs, and if $g$ is too small then there is an overlap on the right stairs.
    \begin{figure}[!ht]
    \begin{center}
    \includegraphics{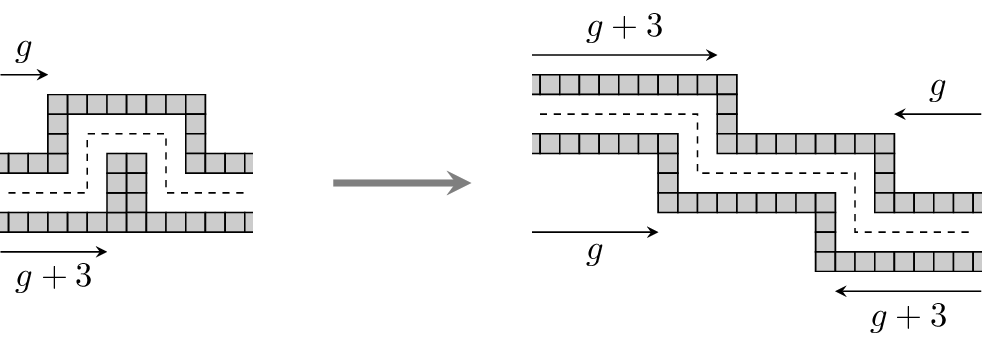}
    \caption{Replacing bumps and dents by two stairs. Here, there are $k=2$ layers to be crossed.}
    \label{fig:dent}
    \end{center}
  \end{figure}

  The problem with this technical step is that crossing $k$ layers implies that the spike goes down by $2(k+1)$. If $\gamma$ is the number of communication places in a segment of length $l$, and $\delta$ is the maximum number of layers which can be crossed, then each segment of length $l$ in the spike would go down by $2\gamma (\delta+1)$ instead of~$1$. This imposes new constraints on $h$ and $s$, but the discretization system is still solvable, at the additional cost of increasing $l$ by a factor $2\gamma (\delta+1)$.
  
  Then, obviously, all polyominoes from the sets $\varphi(t)$ are line convex. Moreover, all polyominoes in any set $\varphi(t)$ have the same height~$s$ and are assumed to have the same reference microtile, so that $\psi$ transforms a line of tiles into a block of microtiles of height~$s$. This block is formed by the horizontal juxtaposition of polyominoes, which, by construction, do not leave holes between them (otherwise, there is a glue mismatch somewhere, leading to some overlap somewhere else, see Fig.~\ref{fig:dent}). Thus, this block is made of $s$ lines of microtileswithout holes, and is therefore line convex. Since a line convex tiling is a vertical superposition of lines, its image by $\psi$ is a vertical superposition of line convex blocks, hence another line convex tiling. Finally, the converse is trivial, if a $(T_{\mu},\NVN)$-tiling is line convex then it simulates a line convex $(T,N)$-tiling.
\qed\end{sproof}

A similar result can be stated for column convex tilings, provided the initial tiling uses tiles in a \emph{vertical} neighborhood $N$: if $(i,j) \in N$ then $i=0$. The proof follows the same sketch as for Proposition~\ref{prop:lconvexity}, with everything rotated by~90\textdegree.

\begin{proposition}
  Let $T$ be a tile system in arbitrary vertical neighborhood $N$, with set of glues $X$. There exists a simulation $\psi$ such that every $(T,N)$-tiling $\T$ is column convex if and only if all $(T_{\mu},\NVN)$-tilings in $\psi(\T)$ are column convex.
\end{proposition}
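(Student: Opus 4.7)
The plan is to transport the argument of Proposition~\ref{prop:lconvexity} through a 90\textdegree\ rotation. Because $N$ is vertical, every neighbor of a tile lies in the same column, so after exchanging rows and columns we are in the ``linear'' situation handled by Theorem~\ref{th:1D} and Proposition~\ref{prop:lconvexity}, with all spikes running vertically instead of horizontally. I would start from the construction of Theorem~\ref{th:tilings} specialized to this setting, and fill every polyomino solidly as in Theorem~\ref{th:total} so that the only remaining holes come from the bumps, dents, and bridges crossed by the spikes.

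Next, I would replace each bump-dent-bridge gadget by a 90\textdegree-rotated copy of the two-staircase gadget from Fig.~\ref{fig:dent}. The glue-matching mechanism is unchanged: the position $g$ of the stairs on each side of the gadget encodes the glue, and if the two sides disagree, the two staircases either overlap on one side or leave a gap on the other; when they agree, no column of microtiles through the gadget contains a hole. The cost of the rotation is that crossing $k$ bridge layers now requires $2(k+1)$ additional microtiles of horizontal width rather than vertical height, so if $\gamma$ is the number of communication places in a segment of length $l$ and $\delta$ the maximum number of crossed layers, $l$ is increased by a factor $2\gamma(\delta+1)$. The discretization system of Theorem~\ref{th:1D} remains solvable with this new lower bound.

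By construction, every polyomino in $\varphi(t)$ then has the same width $s$ and a fixed reference microtile, and is itself column convex. For a column convex $(T,N)$-tiling $\T$, the image $\psi(\T)$ is the horizontal juxtaposition, column by column, of vertical strips of width $s$ built by stacking the polyominoes associated with the tiles of that column; each strip is a stacking of column convex polyominoes with no vertical gaps between consecutive polyominoes (any such gap would force a stair overlap elsewhere, by the glue-matching property), so it is column convex. The horizontal juxtaposition of column convex strips is column convex, hence $\psi(\T)$ is column convex. Conversely, if some $(T_{\mu},\NVN)$-tiling in $\psi(\T)$ is column convex, then for each column of $\T$ the simulating vertical strip is column convex, which forces the preimage column in $\T$ to have no gaps, so $\T$ itself is column convex.

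The main difficulty I expect is purely geometric: making sure that the rotated staircase gadget fits inside the vertical spikes without re-introducing vertical holes inside some column of microtiles, particularly at the points where the stairs meet the outer contour of the polyomino. This is the exact rotated analogue of the case check already carried out in the sketch of Proposition~\ref{prop:lconvexity}, and I expect it to go through in the same way; the rest of the argument is just a relabelling of the parameters $l$, $s$, and $h$ inherited from the rotated construction.
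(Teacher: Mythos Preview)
Your proposal is correct and is exactly the approach the paper takes: the paper gives no separate proof for this proposition, stating only that it ``follows the same sketch as for Proposition~\ref{prop:lconvexity}, with everything rotated by~90\textdegree.'' Your write-up is in fact more detailed than the paper's own treatment, but the underlying idea---rotate the linear-neighborhood construction, fill the polyominoes, replace bump/dent gadgets by the rotated staircase of Fig.~\ref{fig:dent}, and carry over the parameter adjustment $l \mapsto 2\gamma(\delta+1)l$---is identical.
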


%%%%%%%%%%%%%%%%%%%%%%%%%%%% CONCLUSIONS %%%%%%%%%%%%%%%%%%%%%%%%%%%%

\section{Conclusions and Perspectives}\label{sec:concl}

In this paper, we proved that any zipper formed with tiles defined in an arbitrarily complex neighborhood can be simulated by a ribbon obtained by the catenation of simple polyominoes. Each of the new microtiles used for the simulation has, when placed on a ribbon, only two neighbors: its predecessor and its successor on the path. A similar construction can be achieved for tilings, in order to replace arbitrary neighborhoods by the simpler von Neumann neighborhood, and at the same time preserve some properties of the tiling.

A few research directions can extend this work. For example, the simulation relies on the fact that zippers and ribbons are not self-crossing. Although this is the standard way to define them, they could be generalized to other, self-crossing, notions. It would be interesting to see if similar results could be obtained in this case, using new constructions. Another interesting topic would be to study how this construction behaves in three-dimensions, since crossings might be avoided with spikes turning around each other. However, new problems occur and would have to be solved by original techniques.

\smallskip

Another possible improvement is justified by the observation that our constructions are not yet adapted to practical dynamical implementations, such as DNA self-assembly. Theoretically, our constructions could be used for practical purposes: for example, as mentioned in~\cite{R01}, the construction of Theorem~\ref{th:VN} can be used to simulate a Turing machine at temperature~$1$ (\ie DNA tiles attach to the growing assembly whenever one glue matches). But this produces a lot of ``garbage'' consisting of many blocked polyominoes, for a limited number of correct assemblies. Indeed, there are many dynamical scenarios wherein, for example, the self-assembly of a ``bad'' polyomino is started, that blocks any further aggregation. This issue could be solved by adding the possibility to recover from a wrong start, for example by allowing tiles to ``unstick'', as suggested in~\cite{W98,A00}. Further modifications would then be necessary to guarantee that, also in this setting, our constructions can self-assemble fully and correctly in finite time.

Also remark that in our constructions, we emphasized the regularity of the polyominoes, to provide easier general constructions and aggregation. This led however to a blow-up in the number of new microtiles necessary to simulate a zipper, potentially leading to more experimental issues. The next step would be to optimize the construction according to some of the following criteria: number of polyominoes associated with a tile, number of microtiles appearing in a polyomino, number of glues used to build the polyominoes, \etc

An alternative solution to these two problems would be the use of staged self-assembly~\cite{DDFIRSS08} for experiments. This formalism would allow to add an initial stage to construct the polyominoes in separate bins, before mixing them in the final solution, preventing the formation of ``bad'' assemblies. Besides, staged self-assembly would dramatically decrease the number of required glues (hence the number of microtiles), by adding even more initial stages to produce the polyominoes. This would offer more control on the order in which microtiles attach, thus removing the necessity for unique glues. However, this would be achieved at the cost of increased control by an external operator during the self-assembly process. This trade-off between external control and tile complexity is often encountered in DNA self-assembly (see, \eg~\cite{KS06,DDFIRSS08}), and is currently being investigated by the authors.

\smallskip

The authors would like to thank Shinnosuke Seki for his useful suggestion simplifying the construction in Fig.~\ref{fig:general}, and Leonard Adleman, Jarkko Kari, and Erik Winfree for discussions.

\bibliographystyle{abbrv}
\bibliography{arb}

\end{document}